\documentclass[letterpaper, USenglish, numberwithinsect]{article}
\usepackage[margin=1in]{geometry}
\usepackage{style}

\bibliographystyle{plainurl}
\graphicspath{{./figures/}}

\title{Sweeping Orders for Simplicial Complex Reconstruction}

\author{
  Tim Ophelders \thanks{\texttt{t.a.e.ophelders@uu.nl} \\\hspace*{1.8em}Department of Information and Computing Sciences, Utrecht University \\
 \hspace*{1.8em}Department of Mathematics and Computer Science, TU Eindhoven}
  \and
  Anna Schenfisch \thanks{\texttt{schenf@kth.se} \\ \hspace*{1.8em}Department of Mathematics, KTH Royal Institute of Technology}
}

\newcommand{\SSeq}{\mathit{SO}}

\begin{document}

\date{}
\maketitle
\begin{abstract}
    Standard sweep algorithms require an order of discrete points in Euclidean
    space, and rely on the property that, at a given point, all points in the
    halfspace below come earlier in this order. 
    We are motivated by the problem of reconstructing a graph in $\mathbb{R}^d$ from vertex locations and degree information, which was addressed using standard sweep algorithms by Fasy et al.
    We generalize this to the reconstruction of general simplicial complexes.
    As our main ingredient, we introduce a generalized \emph{sweeping order} on~$i$-simplices, maintaining the property that, at a given~$i$-simplex $\sigma$, all $(i+1)$-dimensional cofaces of $\sigma$ in
    the halfspace below $\sigma$ have an $i$-dimensional face that appeared
    earlier in the order (``below'' with respect to some direction
    perpendicular to $\sigma$).
    
    We then go on to incorporate computing such sweeping orders to reconstruct an unknown simplicial complex $K$, starting with only its vertex locations, i.e., its $0$-skeleton.
    Specifically, once we have found the $i$-skeleton of $K$, we compute a sweeping order for the $i$-simplices, and use it to reconstruct the $(i+1)$-skeleton of~$K$ by querying the \emph{indegree}, 
    or the number of $(i+1)$-simplices incident to and below a given $i$-simplex.
    In addition to generalizing the algorithm by Fasy et al. to simplicial complexes, we improve upon the running time of their central subroutine of radially finding edges above a vertex.
\end{abstract}

\section{Introduction}
    Since their introduction
    in~\cite{shamos1976geometric} to compute or count intersections of geometric
    objects, sweep algorithms have
    become ubiquitous in computational geometry, see, e.g.,~\cite{domiter2008sweep, vzalik2005efficient, fortune1986sweepline, ferreira2013parallel, lukavc2014sweep, edelsbrunner1986topologically, vzalik2023geometric, shih2005geometric}.
    For problems in~$\R^d$, such algorithms typically use a $(d-1)$-dimensional hyperplane as their \emph{sweeping object}, swept in some constant direction through the space.
    Sweep algorithms maintain solutions to subproblems associated with the region that was swept thus far.
    Throughout a sweep, there is a discrete set of locations, called \emph{sweep events}, where the solution to the subproblem may change combinatorially.
    For efficient sweep algorithms, this solution can be derived efficiently from solutions to the previous subproblems.

    Suppose we want to sweep, not through zero-dimensional sweep events, but
    through higher-dimensional sweep events, such as $i$-dimensional simplices. Unless these higher-dimensional sweep events are arranged in parallel, a single direction of sweep no longer suffices. How should we order these generalized sweep events to imitate fundamental properties of standard sweep algorithms?
    Not all properties of standard sweep algorithms are generalizable. As
    illustrated in~\cref{fig:star}, unlike the standard setting of zero-dimensional sweep events, we cannot necessarily
    find some ``initial'' $i$-dimensional sweep event that appears below all others. We instead turn to generalizing another fundamental property of standard sweeps, namely, that we can find some initial $i$-dimensional sweep event with no $(i+1)$-dimensional structures below it. Furthermore, the~$(i+1)$-dimensional structures below subsequent sweep events are above some previous sweep event. 

    \begin{figure}[h!]
        \centering
        \includegraphics{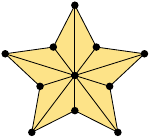}
        \caption{Here, every edge $e$ has other edges in the halfspaces on
        either side of $e$. This general higher-dimensional phenomenon contrasts
        the special zero-dimensional case; in every direction, we can find at
        least one vertex with no other vertices below it. However, for each
        dimension of simplex, there \emph{do} exist simplices with no
        \emph{cofacets} in a halfspace below it, which we will see is true
        in general.}
        \label{fig:star}
    \end{figure}
    
    Our motivation for considering higher-dimensional sweep events is generalizing~\cite{graphsearch},
    which sweeps up through a given set of vertices to reconstruct an unknown edge set.
    At each vertex $v$ in the sweep, they use a query called \emph{indegree} to count edges incident to $v$ in a particular halfspace.
    Using a radial search around $v$, they
    identify all edges incident to and above $v$. Crucially, they maintain that all edges with a vertex below the
    sweep height have already been identified.
    Indegree has a higher-dimensional generalization, namely, the count of
    cofacets of a given simplex in a particular halfspace. 
    Generalizing~\cite{graphsearch} to simplicial complex reconstruction
    as well as improving the runtime of the radial search method is a main result of this paper.

    While we consider indegree as a given function in this paper,
    it is possible to compute indegree from other geometric or topological information.
    Indegree is computed in~\cite{graphsearch, fasy2024faithful, belton2019reconstructing}
    using verbose persistence diagrams, a tool that records homological changes
    in a directional filtration.
    Indeed, simplicial complex reconstruction has close ties to
    faithful discretizations of \emph{directional transforms}, which are sets of
    (topological) descriptors corresponding to height/lower-star filtrations in directions that completely characterize the shape. 
    
    At the time of~\cite{graphsearch}, there was no method to compute
    higher-dimensional indegree using
    topological descriptors.
    However,~\cite{fasy2024faithful} developed exactly this tool, using an
    inclusion-exclusion type argument.
    But fundamentally, 
    as noted in~\cite{graphsearch}, a sweep of the vertex set is no longer feasible,
    since ``radially ordering higher dimensional simplices [around a common vertex]
    is not well-defined, and this issue prevents the methods [...] from
    being immediately transferable [to general simplicial complexes].'' 
    If the sweeping plane eventually contains some simplex~$\sigma$,
    it \emph{is} possible to radially order the cofacets of~$\sigma$ around $\sigma$. However, in general, no single direction is
    perpendicular to all simplices of some fixed positive dimension.
    We could try moving the sweeping plane non-linearly
    to contain each $i$-simplex, but in what order? An arbitrary order does 
    maintain that all~$(i+1)$-simplices below the sweeping plane have already
    been identified.

    In this paper, we show how to compute a \emph{sweeping order}, which conceptually 
    moves the sweeping plane so that all the properties necessary for the
    algorithm in~\cite{graphsearch} are maintained. This sweeping order is the
    key ingredient that allows us to generalize this edge
    reconstruction method to simplicial complexes.
    
\subparagraph{Outline}
    In \cref{sec:preliminaries}, we overview foundational topics
    and notation. We define our main tool in \cref{sec:sweeping}---a
    \emph{sweeping order} of a set of $i$-simplices---and show how such an order
    can be computed, even for degenerate simplicial complexes. 
    The edge
    reconstruction method of~\cite{graphsearch} uses a search through radially
    ordered \emph{candidate edges} incident to a given vertex. We generalize the
    notion of candidates to cofacets of a higher-dimensional simplex in
    \cref{sec:candidates}, and show how to order them radially around such a
    simplex in \cref{sec:candidateordering}. We determine which candidates are
    actually in the unknown simplicial complex~$K$ using a radial search that uses
    {indegree queries}.
   Finally, in \cref{sec:reconstruct}, we extend the methods of~\cite{graphsearch} to reconstruct the
    $(i+1)$-simplices of $K$ via indegree queries, given all lower-dimensional simplices. We can then iteratively reconstruct all of $K$ from just its vertex set
    and indegree queries. Omitted proofs and algorithms are in
    Appendix~\ref{append:proofs}.

\section{Preliminaries}
    \label{sec:preliminaries}
    In this section, we define our most extensively used terms, and refer the reader to~\cite{edelsbrunner2022computational, hatcher} for further information.
    We start with simplices and simplicial complexes.
    \begin{definition}[Simplex]\label{def:simplex}
        An \emph{abstract $i$-simplex} $\sigma$ is a set of $i+1$ elements, called vertices.
        The dimension of $\sigma$ is then $i$, denoted $\dim(\sigma)$. 
        If $\sigma$ and $\tau$ are abstract simplices and $\sigma\subseteq \tau$, 
        we call $\sigma$ a \emph{face} of $\tau$ and $\tau$ a \emph{coface} of $\sigma$. 
        If additionally $\dim(\sigma) = \dim(\tau) -1$, we call $\sigma$ a \emph{facet} of $\tau$ and~$\tau$ a \emph{cofacet} of $\sigma$.
        An \emph{$i$-simplex in $\R^d$} is an abstract $i$-simplex where each vertex maps to a distinct point in $\R^d$.
        We geometrically interpret a simplex as the convex hull of these points.
    \end{definition}
    
    \begin{definition}[Simplicial Complex]\label{def:complex}
        An abstract simplicial complex $K$ is a set of abstract simplices, 
        such that if $\sigma \in K$ and $\rho \subseteq \sigma$, then $\rho \in K$.
        A \emph{simplicial complex in $\R^d$} consists of an abstract simplicial complex, where each vertex is mapped injectively to a point in $\R^d$.
        Geometrically, we think of the simplicial complex as the union of convex hulls of its simplices.
    \end{definition}
    
    Our work always interprets simplicial complexes geometrically, so going
    forward, we may be somewhat less precise with terminology. Note that, so
    far, we allow for simplicial complexes $K$ with degenerate simplices and
    have no restrictions on how $K$ lies in $\R^d$, beyond injectivity of its
    vertex set. \cref{def:embedded} allows us to refine our view and discuss
    various types of simplicial complexes based on how their simplices
    intersect, see also \cref{fig:embedded-locally-neither}.
    
    \begin{definition}[Embedded and locally injective]\label{def:embedded}
        We call a pair of simplices $\sigma$ and $\sigma'$ an \emph{injective pair} if the intersection of their convex hulls is either empty or the convex hull of a common face (i.e., a simplex $\rho$ that is a face of $\sigma$ and $\sigma'$).
        We call a simplicial complex in~$\R^d$ \emph{embedded} if all pairs of simplices are injective pairs.
        We call a simplicial complex in~$\R^d$ \emph{locally injective} if all pairs of simplices that have a common face are injective pairs.
    \end{definition}

    \begin{figure}
        \centering
        \includegraphics{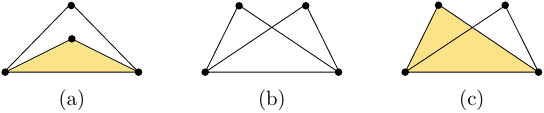}
        \caption{Examples of complexes in $\R^2$ that are (a) both locally injective and embedded, (b) locally injective but not embedded, and (c) neither locally injective nor embedded.}
        \label{fig:embedded-locally-neither}
    \end{figure}
    
    Let $K$ be a simplicial complex in $\R^d$.
    Let $\dim(K)$ denote the maximum dimension over all simplices in~$K$.
    We denote the $i$-skeleton of $K$
    by $K_i$, the number of $i$-simplices by $n_i$, and the total number of simplices in~$K$ by $|K|$. 
    For a simplex $\sigma \in K$, let $\cofacets{\sigma}$ denote all cofacets of~$\sigma$ and $\aff(\sigma)$ denote the affine hull of $\sigma$.
    
    The set of all unit vectors in $\R^d$ is parameterized by the unit $(d-1)$-sphere, 
    denoted~$\sph^{d-1}$, and a unit vector is called a \emph{direction}. 
    Denote by $\prp_\sigma\subseteq\sph^{d-1}$ the set of directions perpendicular to~$\sigma$ 
    and note that, if $\dim(\aff(\sigma)) = i' \leq d-1$, then $\prp_\sigma$ is a $(d-i'-1)$-sphere. 
    With respect to some $s\in\prp_\sigma$, all points in $\sigma$ have the same height, 
    which we refer to as the~\emph{$s$-coordinate} of~$\sigma$. 
    We may abuse dot product notation and write $s \cdot \sigma$ to denote this $s$-coordinate.
    For~$s\in\prp_\sigma$, the set $\downcofacets{s}{\sigma}\subseteq \cofacets{\sigma}$, consists of the cofacets~$\sigma \cup \{v\}$  for which $s\cdot v < s\cdot\sigma$, i.e., the vertex $v$ lies in the open halfspace below~$\sigma$ with respect to direction~$s$.
    We may also write $\cldowncofacets{s}{\sigma}$ if $s \cdot v \leq s \cdot \sigma$.
    Similarly, the set~$\upcofacets{s}{\sigma}$, consists of those cofacets~$\sigma \cup \{v\}$ for which we have $s\cdot v\geq s\cdot\sigma$.
    Regardless of our choice of $s \in \perp_\sigma$, we have $\downcofacets{s}{\sigma} \cup \upcofacets{s}{\sigma} = \cofacets{\sigma}$.
            
    Conceptually, our algorithms involve ordering points by rotating a hyperplane around some central space. In the following definition, we ensure that we make a consistent choice of normal direction to associate with each point we encounter.

 \begin{definition}[$\gamma$-Normal of $p$]\label{def:angle}
        Consider a unit circle $S\subset\R^d$, angularly parameterized by $\gamma\colon[0,2\pi)\to S$ and centered at some point $c\in\R^d$.
        For a point $p \in \R^d$, let $q\in \R^d$ be the orthogonal projection of $p$ onto the plane containing $\gamma$.
        If $q\neq c$, let~$\alpha$ be the unique angle
        such that the ray from $c$ through $\gamma(\alpha - \pi/2 \mathrm{~mod~} 2\pi)$ passes through $q$.
        We call $\gamma(\alpha)$ the \emph{$\gamma$-normal of $p$ relative to $c$}.
        If $q=c$, we define the $\gamma$-normal of $p$ to be~$\gamma(0)$.
        
        \begin{figure}[b]
            \centering
            \includegraphics{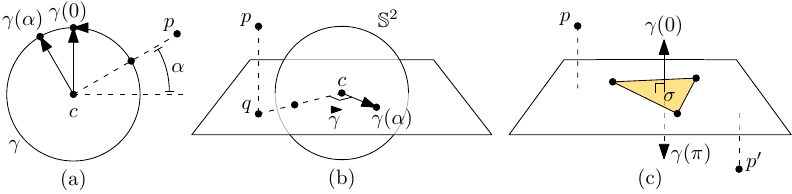}
            \caption{(a)-(b) $\gamma(\alpha)$ is the $\gamma$-normal of $p$. (c)
            since $\sigma$ has codimension one with the ambient space, any point
            above (or below) $\sigma$ has a $\gamma$-normal of $\gamma(0)$ (or
            $\gamma(\pi)$, respectively).}
            \label{fig:relative_normal}
        \end{figure}
        Now consider a simplex $\sigma\subset\R^d$ with
        $\dim(\aff(\sigma))<d-1$, let $\gamma$ be an angularly parameterized
        circle of directions all perpendicular to $\sigma$, and let $p$ be a
        point in $\R^d$.
        Then, the $\gamma$-normal of $p$ relative to any point $c$ in $\aff(\sigma)$ is the same, so we unambiguously define the \emph{$\gamma$-normal of $p$ relative to $\sigma$} to be the $\gamma$-normal of $p$ relative to any such point $c$.
        See~\cref{fig:relative_normal}~(a) and~(b).
    
        Finally, consider a simplex $\sigma\subset\R^d$ with $\dim(\aff(\sigma)) = d-1$.
        Let $\gamma\colon[0,2\pi)\to\sph^{d-1}$ be an angularly parameterized circle of directions so that $\gamma(0)$ is one of the two (antipodal) directions perpendicular to $\sigma$.
        For any point $p\in\R^d$, define the \emph{$\gamma$-normal of~$p$ relative to $\sigma$} to be $\gamma(0)$ if the $\gamma(0)$-coordinate of~$p$ is at least that of $\sigma$, and $\gamma(\pi)$ otherwise.
        See~\cref{fig:relative_normal}~(c).
    \end{definition}
    
    If $v_1 \neq v_2$ are vertices so that $\aff(\sigma \cup \{v_1\}) = \aff(\sigma\cup \{v_2\})$, 
    but $v_1$ and $v_2$ are on ``opposite sides'' of $\sigma$, and if the $\gamma$-normal of $v_1$ relative to $\sigma$ is $\gamma(\alpha)$, then the normal for $v_2$ is $-\gamma(\alpha)$.
    Note that if the $\gamma$-normal of some point relative to $\sigma$ is
    $\gamma(\alpha)$, then $\alpha$ is the angle at which, rotating a halfspace
    around $\sigma$, $p$ lies on the boundary of this halfspace and for which
    this halfspace has the exterior normal $\gamma(\alpha)$.

\section{Sweeping Orders}\label{sec:sweeping}
    The first goal of this paper is to show how to compute an order on simplices
    (along with a corresponding list of directions perpendicular to the
    simplices) that emulates properties characteristic of sweepline
    algorithms with discrete points or vertices as events.

    While an interesting problem by itself, we are motivated by \emph{simplicial complex reconstruction}, where an unknown simplicial complex is ``learned'' through a series of queries.\footnote{In practice, such queries arise from a mix of geometric and topological information about subcomplexes, for instance, from persistence diagrams. We discuss connections to topological data analysis and directional transforms in \cref{append:discretization}.} In particular, suppose that for a (known) $i$-simplex $\sigma$ and direction $s \in \perp_\sigma$, as long as we have already found $\downcofacets{s}{\sigma}$, we are able to learn $\upcofacets{s}{\sigma}$. Then our process of learning \emph{all} $(i+1)$-simplices should begin with some $\sigma_1$ and $s_1$ pair such that $\downcofacets{s_1}{\sigma_1}$ is empty, and the query condition is trivially satisfied. But which $i$-simplex should we process next? In other words, in what order and for what directions $s$ must we process $i$-simplices so that, when we query to learn $\upcofacets{s}{\sigma}$, we are guaranteed to have already learned $\downcofacets{s}{\sigma}$? We observe that if every $(i+1)$-simplex of $\downcofacets{s}{\sigma}$ is also a cofacet of other $i$-simplices that we had already successfully processed, then $\downcofacets{s}{\sigma}$ is known; defining and computing an order and associated directions for which this is possible is the aim of this section.
    
    Specifically, we introduce an order such that, 
    for a given $i$-simplex $\sigma$ and a corresponding direction~$s \in \perp_\sigma$, 
    all cofacets of $\sigma$ contained in $\downcofacets{s}{\sigma}$ are cofacets of some prior $i$-simplex in the order.
    Note that this generalizes a property of the usual sweep through a geometric
    graph, where vertices are sweep events. We formalize this definition below.
    
        \begin{definition}[Sweeping Order]\label{def:feasible_query_order}
            A \emph{sweeping order} of $K_i$ is any sequence $\SSeq_i:=((\sigma_j,s_j))_{j=1}^{n_i}$ that satisfies the following three properties.
            \begin{enumerate}
                \item Each $s_j$ is a direction perpendicular to its paired $i$-simplex, $\sigma_j$.
                \label{prop:perpendicular}
                \item Each $i$-simplex of $K$ appears exactly once in $\SSeq_i$.
                \label{prop:once}
                \item For any $i$-simplex $\sigma_j$, any cofacet in $\downcofacets{s_j}{\sigma_j}$ is also a cofacet of some $\sigma_h$, for $h < j$.
                \label{prop:halfspace}
            \end{enumerate}
        \end{definition}
        
\subsection{Computing a Sweeping Order}
    \label{sec:sweeping_computing}
    In this section, we show how to compute a sweeping order $\SSeq_i$ of $K_i$.
    For $K_0$, we simply pick an arbitrary direction $s$, order vertices by their $s$-coordinate (breaking ties arbitrarily), and output $(v,s)$ for each vertex $v$ in that order. See \cref{alg:sweep}.
    
   To compute a sweeping order for $K_i$ with $i>0$, we assume that we already know a sweeping order $\SSeq_{i-1}$ of $K_{i-1}$ (for example because we have computed $\SSeq_{i-1}$ recursively).
    Then $\SSeq_i$ is the sequence of pairs output by \Call{Order}{$K_i, \SSeq_{i-1}$} in \cref{alg:sweep}.
    We generally write $\rho$, $\sigma$, and $\tau$ to mean a simplex of dimension~$i-1$,~$i$, and $i+1$, respectively.
    We consider the $(i-1)$-simplices $\rho$ in the order prescribed by $\SSeq_{i-1}$.
    First, suppose $\dim(\aff(\rho)) \leq d-2$.
    We radially order the cofacets $\sigma$ of $\rho$ that have not yet been output.
    Specifically, if $\SSeq_{i-1}$ pairs $\rho$ with direction $s$, then the radial
    order of its cofacets is based on a parameterized circle
    $\gamma_\rho\colon [0,2\pi) \to \sph^{d-1}$ of
    directions, rotating around~$\rho$, starting at $s$.
    Each direction~$\gamma_\rho(\alpha)$ is the exterior normal of
    the unique halfspace whose boundary contains~$\rho$.
    For each cofacet $\sigma=\rho\cup\{v\}$ that has not yet been output, we consider the angle~$\alpha_v$ for which $\sigma$ enters this halfspace.
    Specifically, we consider~$\alpha_v$, the angle such that~$\gamma_\rho(\alpha_v)$ is the 
     $\gamma_\rho$-normal of $v$ relative to~$\rho$.
    We then output these cofacets~$\rho\cup\{v\}$ in increasing order based on $\alpha_v$, breaking ties arbitrarily, and pair them with the corresponding direction $\gamma_\rho(\alpha_v)$.
    \Cref{append:sweeping-order} walks through an example of \cref{alg:sweep}.

    If we encounter a simplex $\rho$ whose affine hull has dimension $d-1$, there are only
    two (antipodal) directions perpendicular to it; in this case, we simply choose
    an $\sph^1$ containing these two directions, so that all its cofaces have one of
    two possible $\gamma$-normals relative to $\rho$.
    For clearer exposition, we encompass these two cases in the following definition.
    \begin{definition}[Maximally Perpendicular Circle]
        Let $K$ be a simplicial complex in $\R^d$ and let $\sigma \subseteq K$ be an
        $i$-simplex with $\dim(\aff(\sigma)) < d$. We say that a circle of directions, $\gamma_\sigma$, is
        \emph{maximally perpendicular (to $\sigma$)}, if  
        \begin{enumerate}
            \item When $\dim(\aff(\sigma)) < d-1$, we have $\gamma_\sigma \subseteq \perp_\sigma$, or;
            \item When $\dim(\aff(\sigma)) = d-1$, we have $\perp_\sigma \subseteq \gamma_\sigma$. 
        \end{enumerate}
    \end{definition}
    That is, generally, $\gamma_\sigma$ ``rotates around'' $\sigma$, except in the case
    that $\sigma$ only has two directions perpendicular to it, in which case, these
    directions are contained in $\gamma_\sigma$. In either case, $\gamma_\sigma$ is ``as
    perpendicular'' to $\sigma$ as a circle of directions possibly can be.

    \begin{algorithm}[h!]
        \caption{Computing sweeping orders: $\SSeq_0$, and $\SSeq_i$ given $\SSeq_{i-1}$ and $K_i$.}
        \label{alg:sweep}
        \begin{algorithmic}[1]
            \Procedure{Order}{complex $K_0$}
                \State $s\gets$ arbitrary direction
                \label{algln:sweep:vertdir}
                \For{vertex $v$ of $K_0$ sorted increasingly by $s$-coordinate, breaking ties arbitrarily}
                \label{algln:sweep:vertexbegin}
                    \State \Output{$(v,s)$}
                    \label{algln:sweep:vertex}
                \EndFor
                    \label{algln:sweep:vertexend}
            \EndProcedure
            \Statex
            \Procedure{Order}{complex $K_i$, sweeping order $\SSeq_{i-1}$ of $K_{i-1}$}
                \For{$(\rho,s)$ in $\SSeq_{i-1}$}\Comment{$s$ is a direction perpendicular to the $(i-1)$-simplex $\rho$}
                \label{algln:sweep:forbegin}
                    \State $\gamma_\rho \gets$ \begin{varwidth}[t]{32.5em}\par{circle $\gamma_\rho\colon [0,2\pi)\to \sph^{d-1}$ of directions maximally perpendicular to~$\rho$, where the angle between $s$ and $\gamma_\rho(\alpha)$ is $\alpha$, so $\gamma_\rho(0)=s$ and $\gamma_\rho(\pi)=-s$}\end{varwidth}
                    \label{algln:sweep:gamma}
                    \State $U_\rho \gets \{ \text{vertex $v$ of $K_0$} \mid \text{$\rho\cup\{v\}$ is an $i$-simplex of $K_i$ that was not yet output}\}$
                    \label{algln:sweep:urho}
                    \For{$v\in U_\rho$}
                    \label{algln:sweep:vertexset}
                        \State $\alpha_v \gets \alpha$ such that $\gamma_\rho(\alpha)$ is the $\gamma_\rho$-normal to $v$ relative to $\rho$.
                        \label{algln:sweep:alpha}
                    \EndFor
                    \For{$v\in U_\rho$ sorted increasingly by $\alpha_v$, breaking ties arbitrarily}
                    \label{algln:sweep:sort}
                        \State \Output{$(\rho\cup\{v\},\gamma_\rho(\alpha_v))$}
             \label{algln:sweep:forendfinal}
                    \EndFor
                \EndFor
            \EndProcedure
        \end{algorithmic}
    \end{algorithm}

    \begin{remark}
        In $\R^0$ or $\R^1$, we cannot find \emph{any} $\sph^1$ of
        directions in the ambient space. In this very particular case, we invite the
        reader to imagine $\R^0$ or $\R^1$ along with any simplicial complex it contains as
        being included into $\R^2$, e.g., along the $x$-axis. Then we are able to
        proceed just as we do in the general case.
        Thus, we assume that $d>1$ for ease of exposition.
    \end{remark}
    
    Since our method requires the use of maximally perpendicular circles of directions, and a sweeping order pairs simplices with directions perpendicular to them, we require every simplex to have at least some $\sph^0$ of perpendicular directions.
    We formalize this in \cref{ass:perp}, which we henceforth assume is satisfied by $K$. %
    
    \begin{assumption}[General Position for Enough Perpendiculars]
    \label{ass:perp}
        Let $K$ be a simplicial complex in~$\R^d$.
        For every simplex $\sigma \in K_i$, we
        require that $\dim(\aff(\sigma)) \leq d-1$.
    \end{assumption}
    
    Note that this is quite a lenient condition, and does not prevent
    degeneracies. \Cref{ass:perp} automatically holds when $K_i$
    is at most $(d-1)$-dimensional.
    Additionally, \cref{ass:perp} allows for simplices of higher dimension as long as their affine hulls are sufficiently low-dimensional.
    
    Proceeding with this assumption, \Call{Order}{$K_i, \SSeq_{i-1}$},
    \cref{alg:sweep}, takes as input $K_i$ and $\SSeq_{i-1}$, a sweeping order
    for $K_{i-1}$, and outputs a sequence of $i$-simplices and directions. The
    main result of this section is \cref{thm:sweep}, which says
    \Call{Order}{$K_i, \SSeq_{i-1}$} is a sweeping order
    for $K_{i}$.
    We first show that it satisfies Properties~\ref{prop:perpendicular} and \ref{prop:halfspace} of \cref{def:feasible_query_order}.
    
    \begin{lemma}[Directions are Perpendicular to their Paired Simplices]\label{lem:alg_sweeping_perpendicular}
    \label{lem:perpendicular}
        Let $0 \leq i \leq \dim(K)-1$.
        If $i=0$, let $\SSeq_i=\Call{Order}{K_i}$.
        If $i>0$, let $\SSeq_i=\Call{Order}{K_i, \SSeq_{i-1}}$ for some sweeping order $\SSeq_{i-1}$.
        For all elements $(\sigma, s_\sigma)\in\SSeq_i$, the direction $s_\sigma$ is perpendicular to $\sigma$. 
        That is, the output of \cref{alg:sweep} satisfies \propref{perpendicular} of \cref{def:feasible_query_order}.
    \end{lemma}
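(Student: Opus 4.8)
The plan is to check \propref{perpendicular} directly against the construction in \cref{alg:sweep}, handling the base case $i=0$ and the recursive case $i>0$ separately, and within the recursive case splitting on whether $\dim(\aff(\rho))$ is strictly below $d-1$ or equal to $d-1$. For $i=0$, every output pair is $(v,s)$ with $v$ a single vertex and $s$ the fixed direction of \cref{algln:sweep:vertdir}; since a vertex is a $0$-simplex, $\aff(v)=\{v\}$ and $\perp_v=\sph^{d-1}$, so $s\in\perp_v$ holds vacuously.

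For $i>0$, I would fix an arbitrary output pair, which has the form $(\rho\cup\{v\},\,t)$ with $t:=\gamma_\rho(\alpha_v)$, produced while processing some $(\rho,s)\in\SSeq_{i-1}$. Here $\gamma_\rho$ is the maximally perpendicular circle of \cref{algln:sweep:gamma} with $\gamma_\rho(0)=s$ (it exists because $\dim(\aff(\rho))\le d-1$ by \cref{ass:perp}), and $t$ is the $\gamma_\rho$-normal of $v$ relative to $\rho$ by \cref{algln:sweep:alpha}. To show $t\in\perp_{\rho\cup\{v\}}$ it suffices to show that $\rho$ and $v$ all have a common $t$-coordinate. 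First I would establish $t\in\perp_\rho$: if $\dim(\aff(\rho))<d-1$ this is immediate since a maximally perpendicular circle satisfies $\gamma_\rho\subseteq\perp_\rho$; if $\dim(\aff(\rho))=d-1$, then the codimension-one clause of \cref{def:angle} forces $t\in\{\gamma_\rho(0),\gamma_\rho(\pi)\}=\{s,-s\}$, and $s\in\perp_\rho$ since $\SSeq_{i-1}$, being a sweeping order, satisfies \propref{perpendicular}. In either case $\rho$ has a well-defined $t$-coordinate, say $h$.

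It remains to show $t\cdot v=h$. When $\dim(\aff(\rho))<d-1$, the remark following \cref{def:angle} tells us that, $t$ being the $\gamma_\rho$-normal of $v$ relative to $\rho$, the point $v$ lies on the boundary of the halfspace whose exterior normal is $t$ and whose boundary hyperplane contains $\aff(\rho)$; that hyperplane is exactly $\{x:t\cdot x=h\}$, so $t\cdot v=h$. When $\dim(\aff(\rho))=d-1$, I would instead invoke \cref{ass:perp} applied to the $i$-simplex $\rho\cup\{v\}$: it gives $\dim(\aff(\rho\cup\{v\}))\le d-1$, and since $\aff(\rho)\subseteq\aff(\rho\cup\{v\})$ with $\dim(\aff(\rho))=d-1$ the two affine hulls coincide, so $v\in\aff(\rho)$ and hence $t\cdot v=h$ because $t\in\perp_\rho$. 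Combining with the previous paragraph, every vertex of $\rho\cup\{v\}$ has $t$-coordinate $h$, so $t\in\perp_{\rho\cup\{v\}}$ and \propref{perpendicular} holds.

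I expect the only real obstacle to be the codimension-one case $\dim(\aff(\rho))=d-1$. There \cref{def:angle} pins the paired direction to $\pm s$, so if the new vertex $v$ were allowed to lie strictly off $\aff(\rho)$ the direction would not be perpendicular to $\rho\cup\{v\}$; recognizing that \cref{ass:perp} is exactly the hypothesis ruling this out is the crux. The complementary case $\dim(\aff(\rho))<d-1$, together with the base case, is a routine unwinding of \cref{def:angle} and its accompanying remark.
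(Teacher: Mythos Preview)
Your proposal is correct and follows essentially the same approach as the paper: the same base case, the same split on whether $\dim(\aff(\rho))<d-1$ or $\dim(\aff(\rho))=d-1$, and the same appeal to \cref{ass:perp} in the codimension-one case to force $v\in\aff(\rho)$. Your write-up is in fact more explicit than the paper's, separately verifying $t\in\perp_\rho$ and $t\cdot v=h$ where the paper simply asserts that the $\gamma_\rho$-normal is ``hence perpendicular to $\sigma$''.
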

    \begin{proof}
         Consider an arbitrary $(\sigma, s_\sigma)\in\SSeq_i$.
         If $i=0$, then $\sigma$ is a vertex, then any direction is perpendicular to $\sigma$, including $s_\sigma$.
         So consider the case $i>0$.
         Then $\sigma = \rho \cup \{v\}$ for some~$(i-1)$-simplex $\rho$ and vertex $v$, 
         where $(\rho, s_\rho)$ is an index for the loop in \alglnref{sweep:forbegin}, 
         and~$v$ is an element of $U_p$ (\alglnref{sweep:vertexset}). 
         On \alglnref{sweep:gamma}, we find the angle $\alpha_v$ such that $\gamma_\rho(\alpha_v)$ is the~$\gamma_\rho$-normal of $v$ relative to $\rho$.
         By \cref{ass:perp}, we have $\dim(\aff(\rho)) \leq d-1$.
         
         If $\dim(\aff(\rho)) < d-1$, then $\gamma_\rho$ is perpendicular to $\rho$, so the $\gamma_\rho$-normal $\gamma_\rho(\alpha_v)$ is well-defined and hence perpendicular to $\sigma$.
         If $\dim(\aff(\rho)) = d-1$, then we must have $\aff(\sigma) = \aff(\rho)$, 
         otherwise $\dim(\aff(\sigma))$ would be greater than $d-1$, violating \cref{ass:perp}. 
         Then the~$\gamma_\rho$-normal of $v$ relative to $\rho$ is $\gamma_\rho(0)=s_\rho$, which is perpendicular $\rho$ and hence also to~$\sigma$.
    \end{proof}
    
    \begin{lemma}[Halfspace Property]
    \label{lem:halfspace}
        Suppose that \Call{Order}{$K_i, \SSeq_{i-1}$} $=((\sigma_j, s_j))_{j=1}^{n_i}$. Then, for any $1 \leq j \leq n_i$, for each simplex $\tau$ of $\downcofacets{s_j}{\sigma_j}$, $\tau$ is a cofacet of some $\sigma_h$ with $h<j$. That is, \cref{alg:sweep} satisfies \propref{halfspace} of \cref{def:feasible_query_order}.
    \end{lemma}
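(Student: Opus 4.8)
Fix $1\le j\le n_i$ and $\tau\in\downcofacets{s_j}{\sigma_j}$; we must produce $h<j$ with $\tau$ a cofacet of $\sigma_h$. Write $\tau=\sigma_j\cup\{v\}$, and let $(\rho,s_\rho)\in\SSeq_{i-1}$ be the pair during whose processing (the loop body at \alglnref{sweep:forbegin}) $\sigma_j$ was output; then $\sigma_j=\rho\cup\{u_j\}$ and $s_j=\gamma_\rho(\alpha_{u_j})$, where $\gamma_\rho(\alpha_{u_j})$ is the $\gamma_\rho$-normal of $u_j$ relative to $\rho$. The facet of $\tau$ I will track is $\sigma':=\rho\cup\{v\}$: it is a facet of $\tau$, a cofacet of $\rho$, and distinct from $\sigma_j$ since $v\neq u_j$. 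I will use the elementary fact behind \propref{once} (every $i$-simplex of $K$ is output exactly once, namely during the processing of its $\SSeq_{i-1}$-first $(i-1)$-face): hence $\sigma'$ is output either during the processing of some $(i-1)$-simplex strictly preceding $\rho$, in which case it precedes $\sigma_j$ (which is output during $\rho$'s processing), or during $\rho$'s own processing, in which case it precedes $\sigma_j$ exactly when $\alpha_v<\alpha_{u_j}$ (the batch at \alglnref{sweep:sort} is angle-sorted). So the lemma reduces to showing: \emph{if $\sigma'$ is output during $\rho$'s processing, then $\alpha_v<\alpha_{u_j}$}, and we then take $\sigma_h:=\sigma'$.

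Assume first $\dim\aff(\rho)\le d-2$, so $\gamma_\rho\subseteq\perp_\rho$. Fix $c\in\aff(\rho)$, let $q$ be the orthogonal projection of $v$ onto the $2$-plane through $c$ spanned by $\gamma_\rho$, and let $\gamma_\rho(\alpha_v)$ be the $\gamma_\rho$-normal of $v$ relative to $\rho$. Unwinding \cref{def:angle} (the ray direction $\gamma_\rho(\alpha_v-\pi/2)$, a cosine-to-sine shift, and $\gamma_\rho(\beta)\cdot\rho=\gamma_\rho(\beta)\cdot c$ since $\gamma_\rho(\beta)\perp\rho$) yields
\[
    \gamma_\rho(\beta)\cdot v-\gamma_\rho(\beta)\cdot\rho \;=\; r\,\sin(\alpha_v-\beta)\qquad\text{for all }\beta,
\]
with $r:=|q-c|\ge 0$. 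Putting $\beta=\alpha_{u_j}$, so $\gamma_\rho(\beta)=s_j$ and (by \cref{lem:perpendicular}) $\gamma_\rho(\beta)\cdot\rho=s_j\cdot\sigma_j$, the hypothesis $s_j\cdot v<s_j\cdot\sigma_j$ forces $r>0$ and $\sin(\alpha_v-\alpha_{u_j})<0$; in particular $\alpha_v\neq\alpha_{u_j}$, so ties are impossible. Suppose for contradiction that $\alpha_v>\alpha_{u_j}$: then $\sin(\alpha_v-\alpha_{u_j})<0$ with $\alpha_v-\alpha_{u_j}\in(0,2\pi)$ gives $\alpha_v-\alpha_{u_j}\in(\pi,2\pi)$, hence $\alpha_v\in(\pi,2\pi)$. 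Putting $\beta=0$ in the identity now gives $s_\rho\cdot v-s_\rho\cdot\rho=r\sin\alpha_v<0$, i.e.\ $\sigma'\in\downcofacets{s_\rho}{\rho}$. Since $\SSeq_{i-1}$ is a sweeping order and $(\rho,s_\rho)\in\SSeq_{i-1}$, \propref{halfspace} for $\SSeq_{i-1}$ exhibits $\sigma'$ as a cofacet of an $(i-1)$-simplex strictly preceding $\rho$, so $\sigma'$ is output before $\rho$ is processed — contradicting that $\sigma'$ is output during $\rho$'s processing. Therefore $\alpha_v<\alpha_{u_j}$, as required.

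The case $\dim\aff(\rho)=d-1$ is immediate: by \cref{ass:perp} applied to the $i$-simplex $\sigma_j$ we have $\aff(\sigma_j)=\aff(\rho)$, so $u_j$ lies on the hyperplane $\aff(\rho)$ and $s_\rho\cdot u_j=s_\rho\cdot\rho$; by \cref{def:angle} the $\gamma_\rho$-normal of $u_j$ relative to $\rho$ is then $\gamma_\rho(0)=s_\rho$, so $s_j=s_\rho$ and the hypothesis reads $s_\rho\cdot v<s_\rho\cdot\sigma_j=s_\rho\cdot\rho$, i.e.\ $\sigma'\in\downcofacets{s_\rho}{\rho}$. As before, \propref{halfspace} for $\SSeq_{i-1}$ then forces $\sigma'$ to be output before $\rho$ is processed, hence before $\sigma_j$, so we may again take $\sigma_h:=\sigma'$.

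I expect the main obstacle to be precisely the cyclic nature of the radial order: $\alpha_v$ may exceed $\alpha_{u_j}$ even when $v$ lies strictly below $\sigma_j$, so within $\rho$'s angle-sorted batch $\sigma'$ could land after $\sigma_j$. The crux is that this happens only when $v$ also lies below $\rho$ with respect to the base direction $s_\rho=\gamma_\rho(0)$, and there the halfspace property of the \emph{previous} order $\SSeq_{i-1}$ has already placed $\sigma'$ earlier. So the proof essentially passes the halfspace obligation of $\SSeq_i$ down to $\SSeq_{i-1}$, with the trigonometric identity locating the threshold $\alpha_v\in(\pi,2\pi)$ at which that is legitimate; getting the $\gamma$-normal conventions right (including the degenerate $\dim\aff(\rho)=d-1$ case) and the two evaluations $\beta\in\{0,\alpha_{u_j}\}$ is the main bookkeeping.
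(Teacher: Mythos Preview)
Your proof is correct and follows essentially the same strategy as the paper: track the facet $\sigma'=\rho\cup\{v\}$ of $\tau$, note it must be output either before or during $\rho$'s iteration, and in the latter case compare angles. The differences are presentational. The paper argues directly—both $\sigma_j$ and $\sigma'$ lying in $\upcofacets{s}{\rho}$ confines $\alpha_{u_j},\alpha_v$ to $[0,\pi]$, and the halfspace condition then forces $\alpha_v<\alpha_{u_j}$ via a picture—whereas you argue by contradiction using the explicit identity $\gamma_\rho(\beta)\cdot v-\gamma_\rho(\beta)\cdot\rho=r\sin(\alpha_v-\beta)$ evaluated at $\beta\in\{0,\alpha_{u_j}\}$. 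Your route is more self-contained (no appeal to a figure) and also handles the degenerate case $\dim\aff(\rho)=d-1$ explicitly, which the paper's proof glosses over; under \cref{ass:perp} that case is in fact vacuous for $\downcofacets{s_j}{\sigma_j}$, but your argument works regardless.
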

    
    \begin{proof}
        \begin{figure}[b]
            \centering
            \includegraphics{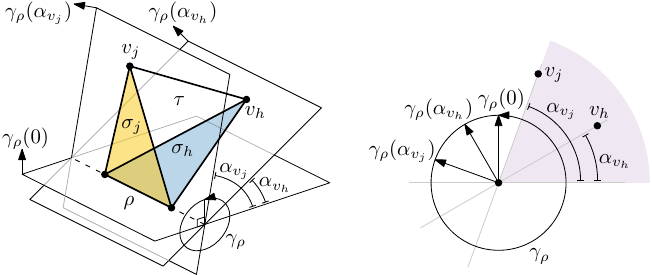}
            \caption{
            For some $\sigma_j=\rho\cup\{v_j\}$, we consider a cofacet $\tau=\rho\cup\{v_j,v_h\}$ (unshaded) for which the vertex $v_h$ lies below $\sigma_j$ with respect to the direction $\gamma_\rho(\alpha_{v_j})$. 
            The simplex $\sigma_h=\rho\cup\{v_h\}$ is a cofacet of $\rho$ that also has $\tau$ as a cofacet. 
            The simplex $\rho$ has a perpendicular circle of directions $\gamma_\rho$.
            }
            \label{fig:halfspace}
        \end{figure}
        When $i=0$, on \alglnref{sweep:vertex}, we output vertices ordered by their height with respect to some direction $s$, breaking ties arbitrarily. 
        Then any edge of a vertex $v_j$ whose other endpoint~$v_h$ lies in the open halfspace below $v_j$ with respect to $s$ is trivially also an edge of~$v_h$, and $v_h$ comes before $v_j$ in the ordering induced by~$s$, so the property is satisfied.
        
        Next, consider the case $i>0$ and suppose that \Call{Order}{$K_i, \SSeq_{i-1}$}, outputs $(\sigma_j,s_j)$ in the iteration $(\rho,s)$.
        Let $v_j$ be the vertex such that $\sigma_j=\rho\cup\{v_j\}$.
        We show for any simplex~$\tau=\sigma_j\cup\{v_h\}\in\downcofacets{s_j}{\sigma_j}$, that the simplex
        $\sigma_h=\rho\cup\{v_h\}$ satisfies the claim.
        
        Since $\sigma_h$ is a cofacet of $\rho$, and the iteration of $(\rho,s)$ outputs all cofacets of $\rho$ that have not yet been output,~$\sigma_h$ is output either in iteration of $(\rho,s)$, or in a previous iteration.
        If~$\sigma_h$ was output in a previous iteration, then $\sigma_h$ comes before $\sigma_j$ in the order, so $h<j$.
        Consider the remaining case that $\sigma_h$ is output during iteration $(\rho,s)$. 
        Since $(\rho, s)$ is an element of a sweeping order for $K_{i-1}$, it
        satisfies \propref{halfspace} of \cref{def:feasible_query_order}, so both
        $\sigma_j$ and $\sigma_h$ must be elements of $\upcofacets{s}{\rho}$, otherwise
        they would have been output in a previous iteration.
        Then we have $s\cdot v_j\geq s\cdot\rho$ and $s\cdot v_h\geq s\cdot\rho$, where $s=\gamma_\rho(0)$.
        
        Let~$\gamma_\rho(\alpha_{v_j})$ and $\gamma_\rho(\alpha_{v_h})$ denote $\gamma_\rho$-normals of $v_j$ and $v_h$ relative to $\rho$, respectively. 
        Because~$\sigma_j\cup\{v_h\}\in\downcofacets{s_j}{\sigma_j}$, and since $s_j = \gamma_\rho(\alpha_{v_j})$, 
        we have $\gamma_\rho(\alpha_{v_j}) \cdot v_h < \gamma_\rho(\alpha_{v_j}) \cdot \sigma_j$.
        Then $v_h$ lies (as illustrated by the shaded sector in~\cref{fig:halfspace}) in the open halfspace that contains $\sigma_j$ in its boundary with exterior normal direction $\gamma_\rho(\alpha_{v_j})$, 
        but not in the open halfspace that contains~$\rho$ in its boundary with exterior normal direction $\gamma_\rho(0)$.
        This means that $0\leq \alpha_h < \alpha_j$, so $\sigma_h$ is output before $\sigma_j$ on \alglnref{sweep:sort}, and the claim is satisfied.
    \end{proof}

    \propref{once} of \cref{def:feasible_query_order} is satisfied trivially, and we arrive at our first main result.
    
    \begin{restatable}{theorem}{sweep}\label{thm:sweep}
        For $i=0$, if we choose $s$ to be the first standard basis vector, \cref{alg:sweep} computes a sweeping order \Call{Order}{$K_0$} for $K_0$ in $O(n_0 \log n_0)$ time. For $i>0$, \cref{alg:sweep} computes a sweeping order
        \Call{Order}{$K_i, \SSeq_{i-1}$} for $K_{i}$
        in $O(n_{i-1} d i \min\{d,i\} + n_i(i+d+\log n_0))$ time.
    \end{restatable}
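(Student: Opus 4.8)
The plan is to establish correctness and running time separately, leveraging the two lemmas already proved. For correctness, Property~\ref{prop:perpendicular} is exactly \cref{lem:perpendicular} and Property~\ref{prop:halfspace} is exactly \cref{lem:halfspace}, so only Property~\ref{prop:once} remains. For $i=0$ this is immediate since the loop in \cref{alg:sweep} iterates over each vertex of $K_0$ exactly once. For $i>0$, I would argue that every $i$-simplex $\sigma$ of $K_i$ is output exactly once: it is output at most once because the set $U_\rho$ in \alglnref{sweep:urho} explicitly excludes already-output simplices, and it is output at least once because $\sigma$ has at least one facet $\rho$ (in fact $i+1$ of them), and $\rho$ appears in $\SSeq_{i-1}$ by \propref{once} for dimension $i-1$; when the first such $\rho$ is processed, $\sigma$ has not yet been output, so $\sigma\in U_\rho$ and it gets output in that iteration. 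Combining the three properties with \cref{lem:perpendicular} and \cref{lem:halfspace} gives that the output is a sweeping order.

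For the running time of the $i=0$ case: choosing $s$ to be the first standard basis vector, computing each $s$-coordinate is $O(1)$ (reading one coordinate), and sorting $n_0$ vertices takes $O(n_0\log n_0)$; outputting is linear. Hence $O(n_0\log n_0)$ total.

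For the $i>0$ case, I would break the cost of \Call{Order}{$K_i,\SSeq_{i-1}$} into a per-$(i-1)$-simplex part and a per-$i$-simplex part. For each of the $n_{i-1}$ simplices $\rho$ in $\SSeq_{i-1}$, we must set up the maximally perpendicular circle $\gamma_\rho$ through $s$ and $-s$ (\alglnref{sweep:gamma}); this is a linear-algebra computation in $\R^d$ involving the $i$-dimensional (at most) affine hull of $\rho$ — e.g. extending $\{s\}$ and a basis of $\aff(\rho)$'s direction space to pick a second circle direction, which costs $O(d\,i\min\{d,i\})$ using an orthogonalization restricted to the relevant $\min\{d,i\}$-dimensional subspace. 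This accounts for the $O(n_{i-1}\,d\,i\min\{d,i\})$ term. Then, across all $\rho$, each $i$-simplex $\sigma=\rho\cup\{v\}$ is handled (placed in some $U_\rho$, assigned an angle $\alpha_v$, and output) exactly once by the Property~\ref{prop:once} argument above; computing the $\gamma_\rho$-normal of $v$ relative to $\rho$ per \cref{def:angle} is a projection onto the plane of $\gamma_\rho$ plus an $\operatorname{atan2}$, costing $O(i+d)$ (the $i$ for handling $\rho$'s vertex set, the $d$ for the geometric projection in $\R^d$), and the $n_i$ angles are then sorted in $O(n_i\log n_i)$; since each simplex shares at most one vertex beyond $\rho$, charging the sort to the $n_i$ outputs and noting $\log n_i \le (i+1)\log n_0$ can be absorbed, or more carefully bucketing the sorts per $\rho$ gives the claimed $O(n_i(i+d+\log n_0))$. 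Summing the two parts yields the stated bound.

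The main obstacle I anticipate is the bookkeeping for the $i$-simplex-side cost: one must argue that the work is genuinely charged once per $i$-simplex rather than once per (facet, simplex) incidence pair — which follows from the "first facet processed outputs it" argument — and one must be careful that the $\log$ factor in sorting the $\alpha_v$ values is $\log n_0$ rather than $\log n_i$, which requires either sorting within each $U_\rho$ separately (each of size at most $n_0$) or a global argument; I would go with the per-$\rho$ sort so the bound is transparent. The geometric cost $O(d\,i\min\{d,i\})$ of constructing $\gamma_\rho$ also deserves a careful word: the relevant subspace has dimension $\min\{d, i+1\}$, and orthogonalizing within it is the dominant step, which I would state via a standard Gram–Schmidt / projection complexity estimate rather than working it out in detail.
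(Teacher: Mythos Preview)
Your proposal is correct and takes essentially the same approach as the paper: invoke \cref{lem:perpendicular} and \cref{lem:halfspace} for Properties~\ref{prop:perpendicular} and~\ref{prop:halfspace}, dispatch Property~\ref{prop:once} directly (the paper simply calls it trivial; your explicit argument is fine), and split the $i>0$ running time into a per-$\rho$ cost for building $\gamma_\rho$ (via orthogonalization, as you suggest; the paper cites Gaussian elimination plus a QR-decomposition) and a per-$i$-simplex cost for angle computation plus per-$\rho$ sorting of the $U_\rho$, each of size at most $n_0$. One small correction to the bookkeeping concern you raised: computing $U_\rho$ on \alglnref{sweep:urho} must inspect \emph{every} cofacet of $\rho$ to test whether it was already output, so that step is genuinely per-incidence rather than per-$i$-simplex---but since $\sum_\rho |\cofacets{\rho}| = (i+1)\,n_i$, this contributes only $O(i\,n_i)$ and is absorbed by your $n_i(i+d+\log n_0)$ term, so the stated bound stands.
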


\begin{proof}
    The output of \cref{alg:sweep} satisfies Properties~\ref{prop:perpendicular} and~\ref{prop:halfspace} of \cref{def:feasible_query_order} by Lemmas \ref{lem:perpendicular} and \ref{lem:halfspace}. It satisfies Property~\ref{prop:once} trivially, and thus, is a sweeping order for $K_i$. 

    In \alglnref{sweep:vertexbegin}, for $i=0$ and choosing $s$ to be the first standard basis vector, we compute the $s$-coordinates of all vertices in $O(n_0)$ time, and sort them in $O(n_0 \log n_0)$ time.
    Next, suppose $i>0$. The loop of \alglnref{sweep:forbegin} has $n_{i-1}$ iterations.
    On \alglnref{sweep:gamma}, we find $\gamma_\rho$ in $O(d i \min\{d,i\})$ time using Gaussian elimination and a QR-decomposition, similar to \cite[Algorithm 5]{fasy2024faithful}.
    We can compute the $U_\rho$ of \alglnref{sweep:urho} in $O(|\cofacets{\rho}|)$ time. 
    Over all $\rho$'s, this takes $O(i n_i)$ time.
    We compute all $\alpha_v$'s of \alglnref{sweep:alpha} for a particular $\rho$ in $O(d|\cofacets{\rho}|)$ time. 
    Over all $\rho$'s, this takes $O(d n_i)$ time.
    In \alglnref{sweep:sort}, each $\rho$ has at most $n_0$ cofacets, so we sort~$U_\rho$ in $O(|\cofacets{\rho}|\log |\cofacets{\rho}|)=O(|\cofacets{\rho}|\log n_0)$ time.
    Over all $\rho$'s, this sorting takes $O(\sum_\rho |\cofacets{\rho}|\log n_0)=O(n_i\log n_0)$ time.
    The total running time is~$O(n_{i-1} d i \min\{d,i\} + i n_i + d n_i + n_i\log n_0)=O(n_{i-1} d i \min\{d,i\} + n_i(i+d+\log n_0))$.
\end{proof}

    For our application in \cref{sec:reconstruct}, it will be useful to retain the information of the maximally perpendicular circles used on Line~\ref{algln:sweep:gamma}, and call a sweeping order that additionally records this information \emph{circle-reporting}.
    We can modify \cref{alg:sweep} to output a circle-reporting sweeping order while maintaining the same asymptotic running time, and denote this modified version of \cref{alg:sweep} by \textsc{Order$^\circ$}.

\section{Candidate Simplices}
    \label{sec:candidates}
    Given $K_i$, there may be vertices defining an $(i+1)$-simplex, $\tau$, that simply cannot be an
    $(i+1)$-simplex of $K$.
    For instance, if not all facets of $\tau$ are in $K_i$, then we know that~$\tau$ is not part of $K$ and we do not need to consider~$\tau$ in our search.
    Moreover, if we know $K$ satisfies extra properties, e.g., that it is embedded, and adding $\tau$ to~$K_i$ would violate that property, there is no reason to consider~$\tau$ in our search. 
    \Cref{def:candidate} specifies which potential simplices our search should
    consider, knowing that the to-be-reconstructed $K$ satisfies
    property~$X$. 
    Intuitively,
    a candidate cofacet of $\sigma$ is an~$(i+1)$-simplex that, at least,
    \emph{may} be contained in~$K$: determining if an
    $(i+1)$-simplex is or is not in $K$ is the main content
    of~\cref{sec:reconstruct}. 
    \begin{definition}[Candidate Vertices and Candidate Cofacets]\label{def:candidate}
         Let $K$ be a simplicial complex in~$\R^d$ with property $X$ (locally
         injective, embedded, no condition, etc.), and let $\sigma\subseteq K$ be an $i$-simplex
         for some $i < \dim(K)$. Suppose that there exists a vertex $v \in K_0$ such that
         the simplices in $K_i$ defined on $\sigma \cup \{v\}$ form the boundary of
         an $(i+1)$-simplex $\tau$ such that $K_i \cup \tau$ is a subcomplex of some
         simplicial complex with property $X$. We call $v$ a \emph{candidate vertex}
         of~$\sigma$. Furthermore, we call the $(i+1)$-simplex defined by $\sigma
         \cup \{v\}$ a \emph{candidate cofacet of $\sigma$}. 
    \end{definition}
        
    \cref{fig:candidates} shows all candidate simplices, for various properties of the simplicial complex $K$.
    We assume that the set of candidates of dimension $i+1$ can be determined once we have correctly reconstructed~$K_i$.
    We henceforth assume that these candidates are known to us, and denote the candidate cofacets of a simplex $\sigma$ by $\cand(\sigma)$.

    \begin{figure}[b]
        \centering
        \includegraphics{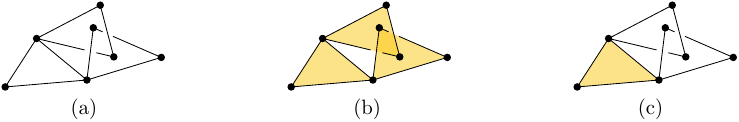}
        \caption{
        Suppose that~(a) is the one-skeleton of an otherwise unknown simplicial complex~$K$.
        If~$K$ is known to be locally injective, (b) depicts the set of candidate triangles.
        If~$K$ is known to be embedded, we instead get the set of candidate triangles shown in (c).
        } 
        \label{fig:candidates}
    \end{figure}
 
    Recall that computing a sweeping order used only mild assumptions on the
    underlying simplicial complex (\cref{ass:perp}), and allowed
    degeneracies. However, with our current goal of reconstruction, we introduce
    stricter assumptions on $K_i$ and its candidate simplices.
    
    \begin{assumption}[General Position for $(i+1)$-Reconstruction]
    \label{ass:reconstruction}
        Let $K$ be a simplicial complex in~$\R^d$. Suppose, for every $i$-simplex
        $\sigma$, the set $\cand(\sigma)$ of all candidate cofacets of~$\sigma$ along with $K_i$ is locally injective in~$\R^d$. Then we say that $K$ is in
        \emph{general position for $(i+1)$-reconstruction}.
    \end{assumption}
    \begin{observation}\label{obs:embedded_candidates}
        \Cref{ass:reconstruction} is automatically satisfied if $K$ is known to be embedded.
    \end{observation}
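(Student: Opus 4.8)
The plan is to check local injectivity of $\cand(\sigma)\cup K_i$ one pair of simplices at a time. This set is a simplicial complex, since by \cref{def:candidate} every facet of a candidate cofacet already lies in $K_i$, so it is closed under taking faces; hence it suffices to show that any two of its simplices that share a common face form an injective pair (\cref{def:embedded}). There are three cases. If both simplices lie in $K_i$, they form an injective pair because $K_i\subseteq K$ and $K$ is embedded. If one of them is a candidate cofacet $\tau$ and the other lies in $K_i$, then, since the property $X$ here is ``embedded,'' \cref{def:candidate} says $K_i\cup\{\tau\}$ is a subcomplex of an embedded complex, hence itself embedded; both simplices lie in $K_i\cup\{\tau\}$, so they form an injective pair. (In particular $K_i\cup\{\tau\}$ is embedded for every candidate cofacet $\tau$ of $\sigma$; this is reused below, and it forces each such $\tau$, and hence also $\sigma$, to be a non-degenerate simplex.) The remaining case, the heart of the matter, is two distinct candidate cofacets $\tau=\sigma\cup\{v\}$ and $\tau'=\sigma\cup\{v'\}$; then $\tau\cap\tau'=\sigma$, and I must prove $\mathrm{conv}(\tau)\cap\mathrm{conv}(\tau')=\mathrm{conv}(\sigma)$ (the inclusion $\supseteq$ being immediate).

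The guiding principle for this case is that any overlap of $\mathrm{conv}(\tau)$ and $\mathrm{conv}(\tau')$, though invisible inside $K$ itself (the cofacets $\tau,\tau'$ need not belong to $K$), must surface as a failure of injectivity between a \emph{facet} of one cofacet, which \emph{does} lie in $K_i$, and the other cofacet, for which $K_i\cup\{\cdot\}$ is embedded. I begin with two reductions. If $\aff(\tau)\neq\aff(\tau')$, then, as both affine hulls contain the $i$-dimensional flat $\aff(\sigma)$ and both are $(i+1)$-dimensional, their intersection is exactly $\aff(\sigma)$; since $\mathrm{conv}(\tau)\cap\aff(\sigma)=\mathrm{conv}(\sigma)$ (and likewise for $\tau'$) because $\mathrm{conv}(\sigma)$ is a facet of the simplex $\mathrm{conv}(\tau)$, we get $\mathrm{conv}(\tau)\cap\mathrm{conv}(\tau')\subseteq\aff(\sigma)$ and hence $=\mathrm{conv}(\sigma)$, and are done. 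Also, we cannot have an apex of one cofacet inside the other, say $v'\in\mathrm{conv}(\tau)$: picking any facet $F'=(\sigma\setminus\{w\})\cup\{v'\}$ of $\tau'$ (which lies in $K_i$), the point $v'$ lies in $\mathrm{conv}(F')\cap\mathrm{conv}(\tau)$ but not in $\aff(\sigma)$ (non-degeneracy of $\tau'$), whereas every common face of $F'$ and $\tau$ is a face of $\sigma\setminus\{w\}$ and so has its convex hull inside $\aff(\sigma)$; thus $(F',\tau)$ is not an injective pair, contradicting that $K_i\cup\{\tau\}$ is embedded. So it remains to treat $\aff(\tau)=\aff(\tau')=:H$ with $v\notin\mathrm{conv}(\tau')$ and $v'\notin\mathrm{conv}(\tau)$.

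Suppose for contradiction that some $x\in\bigl(\mathrm{conv}(\tau)\cap\mathrm{conv}(\tau')\bigr)\setminus\mathrm{conv}(\sigma)$ exists; then $x\neq v$ and $x\neq v'$. Since $\mathrm{conv}(\sigma)$ is the facet of the simplex $\mathrm{conv}(\tau')$ opposite $v'$, the ray from $v'$ through $x$ leaves $\mathrm{conv}(\tau')$ at a point $p\in\mathrm{conv}(\sigma)$, and the segment $[v',p]$ lies in $\mathrm{conv}(\tau')\subseteq H$ and meets $\aff(\sigma)$ only at $p$. As $v'\notin\mathrm{conv}(\tau)$ while $p\in\mathrm{conv}(\sigma)\subseteq\mathrm{conv}(\tau)$, the segment $[v',p]$ enters $\mathrm{conv}(\tau)$ at a first point $y$, which lies on the boundary of $\mathrm{conv}(\tau)$ relative to $H=\aff(\tau)$; since $x\in[v',p]\cap\mathrm{conv}(\tau)$ with $x\neq p$, we have $y\neq p$, hence $y\notin\aff(\sigma)$. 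Because $\mathrm{conv}(\tau)$ is a non-degenerate $(i+1)$-simplex, its relative boundary is the union of the hulls of its facets; as $y$ is not in the facet $\mathrm{conv}(\sigma)$, it lies in $\mathrm{conv}(F)$ for some facet $F=(\sigma\setminus\{w_1\})\cup\{v\}$ of $\tau$, and $F\in K_i$. Then $y\in\mathrm{conv}(F)\cap\mathrm{conv}(\tau')$, while every common face of $F$ and $\tau'$ is a face of $\sigma\setminus\{w_1\}$ and so has its convex hull inside $\aff(\sigma)$, which does not contain $y$; thus $(F,\tau')$ is not an injective pair, contradicting that $K_i\cup\{\tau'\}$ is embedded. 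This contradiction gives $\mathrm{conv}(\tau)\cap\mathrm{conv}(\tau')=\mathrm{conv}(\sigma)$, finishing the third case, so $\cand(\sigma)\cup K_i$ is locally injective for every $i$-simplex $\sigma$.

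I expect the main obstacle to be this third case. The tempting shortcut — that two candidate cofacets of $\sigma$ can overlap only when they lie on a common side of $\aff(\sigma)$, and that then one apex lies inside the other cofacet — is false; two overlapping cofacets can have both apices outside each other's hull, so one must instead follow a ray from an apex out through the opposite facet to expose the conflict, and handle the degenerate ``apex-inside'' configuration separately. A more routine point needing care is non-degeneracy of $\sigma$, $\tau$, and $\tau'$ — used to treat $\mathrm{conv}(\sigma)$ as a genuine facet and to decompose relative boundaries into facets — which is automatic here since an embedded complex contains no degenerate simplex (its facets could not be pairwise injective).
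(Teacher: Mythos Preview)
The paper states this as an observation without proof, so there is no argument to compare yours against. Your proof is correct and supplies exactly the details a careful reader would need. The first two cases are immediate, and for the third case---two distinct candidate cofacets $\tau=\sigma\cup\{v\}$ and $\tau'=\sigma\cup\{v'\}$---your ray argument cleanly reduces an improper overlap of $\mathrm{conv}(\tau)$ and $\mathrm{conv}(\tau')$ to an improper overlap between a facet $F\in K_i$ of $\tau$ and $\tau'$, contradicting that $K_i\cup\{\tau'\}$ is embedded. The separate handling of the ``apex inside the other cofacet'' configuration and the use of non-degeneracy (which, as you note, follows from embeddedness via Radon-type reasoning) are both necessary and correctly done. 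In short, you have proved what the paper merely asserts.
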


    \begin{figure}[t]
        \centering
        \includegraphics{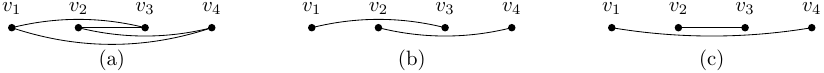}
        \caption{
        Suppose that (a) displays the set of candidate edges for $v_1, v_2, v_3$, and $v_4$, which violates \cref{ass:reconstruction}. Note that this can lead to non-reconstructible simplicial complexes, as (b) and (c) have the same indegree
        information for all directions.}
        \label{fig:assumption2}
    \end{figure}
    
    \Cref{fig:assumption2} justifies the need for \cref{ass:reconstruction} by giving simplicial complexes violating the assumption that are \emph{not} reconstructible via indegree queries.
    Note that a stronger property $X$ may lead to fewer candidate cofacets, and a weaker \cref{ass:reconstruction}.
    For a candidate cofacet~$\tau$, \cref{ass:reconstruction} implies that $\dim(\tau) = \dim(\aff(\tau))$.
    It also implies the following.
    
    \begin{restatable}{lemma}{twosides}
    \label{lem:twosides}
        Let $\sigma$ be an $i$-simplex of a simplicial complex $K$ in $\R^d$ where $i<d$.
        Let~$C_{i+1}$ denote the candidate cofacets of $\sigma$. If
        $K_i \cup C_{i+1}$ is locally injective, then only two candidate cofacets of
        $\sigma$ can share the same affine hull, and such candidates are separated by $\aff(\sigma)$.
    \end{restatable}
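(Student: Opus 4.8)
The statement asks: among all candidate cofacets of an $i$-simplex $\sigma$, those that share a common affine hull come in at most a pair, and that pair lies on opposite sides of $\aff(\sigma)$.

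Let me think about this carefully. A candidate cofacet of $\sigma$ is an $(i+1)$-simplex of the form $\tau_v = \sigma \cup \{v\}$. Its affine hull $\aff(\tau_v)$ is an $(i+1)$-dimensional affine subspace (using \cref{ass:reconstruction}, which forces $\dim(\tau_v) = \dim(\aff(\tau_v)) = i+1$, so $\sigma$ is nondegenerate and $v \notin \aff(\sigma)$). The affine hull $\aff(\tau_v)$ contains $\aff(\sigma)$, which is an $i$-dimensional affine subspace. So $\aff(\sigma)$ is a hyperplane *within* the $(i+1)$-dimensional space $\aff(\tau_v)$, and it divides $\aff(\tau_v)$ into two open half-$(i+1)$-spaces. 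The vertex $v$ lies in exactly one of these (it's not in $\aff(\sigma)$). So within any fixed affine hull $H$ of dimension $i+1$ containing $\aff(\sigma)$, we're asking: how many candidate vertices $v$ with $\aff(\tau_v) = H$ can there be, and I claim at most one per side.

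Suppose for contradiction two candidate vertices $v, w$ with $\aff(\tau_v) = \aff(\tau_w) = H$ and $v, w$ on the *same* side of $\aff(\sigma)$ within $H$. Then consider the two simplices $\tau_v = \sigma \cup \{v\}$ and $\tau_w = \sigma \cup \{w\}$. They share the common face $\sigma$, so by local injectivity of $K_i \cup C_{i+1}$, the intersection of their convex hulls must be exactly the convex hull of a common face $\rho$ of both $\tau_v$ and $\tau_w$. The faces of $\tau_v$ are subsets of $\sigma \cup \{v\}$; faces of $\tau_w$ are subsets of $\sigma \cup \{w\}$; common faces are subsets of $\sigma$ (since $v \neq w$). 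So $\rho \subseteq \sigma$, meaning $\mathrm{conv}(\tau_v) \cap \mathrm{conv}(\tau_w) \subseteq \mathrm{conv}(\sigma)$. Now I need to derive a contradiction from $v, w$ being on the same side. The idea: $\mathrm{conv}(\tau_v)$ and $\mathrm{conv}(\tau_w)$ are both full-dimensional simplices in $H$ sitting on the same side of the hyperplane $\aff(\sigma) \cap H = \aff(\sigma)$, both having the facet $\mathrm{conv}(\sigma)$ on that hyperplane. Two such simplices (think of two triangles in the plane sharing a base edge, both above the base) must overlap in more than just the shared base — e.g., take the vertex among $v, w$ that is "lower" (closer to $\aff(\sigma)$ in the direction normal to $\aff(\sigma)$ within $H$, or more carefully, use a supporting argument); a small neighborhood argument near an interior point of $\mathrm{conv}(\sigma)$ pushed slightly toward the common side lies in both convex hulls. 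More cleanly: pick a point $x$ in the relative interior of $\mathrm{conv}(\sigma)$. For small $\epsilon > 0$, the point $x + \epsilon(v - x)$ lies in the relative interior of $\mathrm{conv}(\tau_v)$, hence in the open side of $\aff(\sigma)$ containing $v$; similarly a short segment from $x$ toward $w$ enters $\mathrm{conv}(\tau_w)$. Since both $v$ and $w$ are on the same side, I want to show the *cones* $\{x + t(v-x) + (\text{small tangential})\}$ and the analogous cone for $w$ overlap in a set of positive $(i+1)$-dimensional measure near $x$ — but that set lies in $\mathrm{conv}(\tau_v) \cap \mathrm{conv}(\tau_w)$, which we showed is contained in the $i$-dimensional set $\mathrm{conv}(\sigma)$, a contradiction. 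Conversely, if $v, w$ are on *opposite* sides, the convex hulls only meet along $\mathrm{conv}(\sigma)$ (they genuinely lie in closed half-spaces meeting only on the boundary hyperplane intersected with the simplices), which is consistent, and that's why a pair is allowed.

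So the skeleton of the proof is: (1) invoke \cref{ass:reconstruction} to get nondegeneracy, so each $\tau_v$ is a genuine $(i+1)$-simplex with $v \notin \aff(\sigma)$ and $\aff(\sigma)$ splits $\aff(\tau_v)$ into two sides; (2) group candidate cofacets by their affine hull $H$; within a fixed $H$, each $\tau_v$ sits on one side of $\aff(\sigma)$; (3) show two cofacets on the same side contradict local injectivity via the overlapping-neighborhood argument above; (4) conclude at most one per side, hence at most two per affine hull, and when there are two they're separated by $\aff(\sigma)$.

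The main obstacle is step (3): making the "two simplices sharing a facet on a hyperplane, both on the same side, must overlap beyond the facet" argument rigorous and clean. The cleanest route is probably: let $n$ be a unit normal to $\aff(\sigma)$ inside $H$ pointing toward the side of $v$ (which equals the side of $w$). Parametrize: any point of $\mathrm{conv}(\tau_v)$ is $(1-t)y + tv$ with $y \in \mathrm{conv}(\sigma)$, $t \in [0,1]$, and its $n$-coordinate (relative to $\aff(\sigma)$) is $t \, (n \cdot (v - \pi(v)))$ where... — actually simpler: consider the point $p = \tfrac12 x + \tfrac12 v' $ where $x \in \mathrm{relint}\,\mathrm{conv}(\sigma)$ and $v'$ is whichever of $v,w$ has the *smaller* positive $n$-height above $\aff(\sigma)$; show $p \in \mathrm{relint}\,\mathrm{conv}(\tau_{v'})$ and also $p \in \mathrm{conv}(\tau_{w'})$ for the other one by writing $p$ as a convex combination using the fact that $w'$ lies "higher" in the same direction — a short linear-algebra check. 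This exhibits a point in the intersection that is not in $\mathrm{conv}(\sigma)$, contradicting local injectivity. I'd present this computation but keep it terse, as it is the only genuinely technical point; everything else is bookkeeping with affine hulls and the definition of local injectivity.
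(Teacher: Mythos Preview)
Your approach is essentially the paper's: both argue that $\aff(\sigma)$ is a hyperplane inside the shared $(i{+}1)$-dimensional hull $H$, that two candidate cofacets $\tau_v,\tau_w$ on the same side of $\aff(\sigma)$ would have $\sigma$ as their largest common face yet meet in strictly more than $\mathrm{conv}(\sigma)$, violating local injectivity. The paper in fact simply \emph{asserts} this last geometric fact without argument; you go further and try to justify it.

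Your cone/neighborhood argument is correct, but the ``actually simpler'' explicit-point construction you settle on at the end is flawed. Taking $p=\tfrac12 x+\tfrac12 v'$ (with $v'$ the one of $v,w$ with smaller $n$-height) need not lie in $\mathrm{conv}(\tau_{w'})$: matching the $n$-height fixes the barycentric weight on $w'$, but the residual point in $\aff(\sigma)$ can land outside $\mathrm{conv}(\sigma)$ when $w'$ has a large tangential offset. Concretely, in $\R^2$ with $\sigma=\{(0,0),(1,0)\}$, $v'=(\tfrac12,1)$, $w'=(100,2)$, and $x=(\tfrac12,0)$, the point $p=(\tfrac12,\tfrac12)$ has barycentric coordinate $-\tfrac{49}{2}$ on the vertex $(1,0)$ of $\tau_{w'}$, so $p\notin\mathrm{conv}(\tau_{w'})$. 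Stick with the neighborhood version instead: for $x\in\mathrm{relint}(\mathrm{conv}(\sigma))$ and $n$ the unit normal to $\aff(\sigma)$ within $H$ pointing to the common side, the point $x+\varepsilon n$ lies in both $\mathrm{conv}(\tau_v)$ and $\mathrm{conv}(\tau_w)$ for all sufficiently small $\varepsilon>0$, since near a relative-interior point of the shared facet each simplex locally coincides with the closed half of $H$ on the $n$-side of $\aff(\sigma)$.
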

    
    \begin{proof}
         Let $\tau$ be a candidate cofacet of $\sigma$. Because $K_i \cup C_{i+1}$ is locally injective, the dimension of $\aff(\tau)$ is one more than $\aff(\sigma)$.
         The space $\aff(\sigma)$ is a
         separating hyperplane in $\aff(\tau)$, so there is a well-defined notion of
         being on a particular side of $\aff(\sigma)$ in $\aff(\tau)$.
         Let $\tau'$ be some other candidate cofacet of $\sigma$ and suppose, towards a contradiction, that $\aff(\tau) = \aff(\tau')$, and both
         $\tau$ and $\tau'$ lie on the same side of $\aff(\sigma)$.
         Then $\sigma$ is the largest common face of $\tau$ and $\tau'$, but their intersection contains more than $\sigma$, so they are not an injective pair, contradicting our assumption of local injectivity for $K_i \cup C_{i+1}$.
         Finally, since there are only two sides of $\aff(\sigma)$ in an $(i+1)$-dimensional
         hull, candidate cofacets of $\sigma$ that share
         a common affine hull can only come in pairs on opposite sides.
    \end{proof}

\section{Radially Ordering Candidate Cofacets}
    \label{sec:candidateordering}
    In this paper, we use circles of maximally perpendicular
    directions in two separate but related contexts. In \cref{sec:sweeping}, we
    found a sweeping order for $K_i$ by rotating around circles maximally
    perpendicular to each~$(i-1)$-simplex. In this section, we discuss
    a
    second type of maximally perpendicular circle, around which we rotate to order candidate cofacets of some central $i$-simplex in the reconstruction process.
     
    \begin{definition}[Candidate-Ordering Circle]
        Let $K$ be a simplicial complex in $\R^d$ and consider an $i$-simplex $\sigma
        \subseteq K$ for which there exists some parameterized circle of directions, $\gamma_\sigma\colon [0, 2\pi) \to \R^d$,
        that is maximally perpendicular to~$\sigma$.
        If every candidate vertex of $\sigma$ has a unique
        $\gamma_\sigma$-normal relative to $\sigma$, then we say
        $\gamma_\sigma$ is \emph{candidate-ordering}.
    \end{definition}
    
    Since orders from different angular parameterizations are simply cyclic permutations of each other, we generally do not specify parameterizations. 
    Towards establishing the existence of candidate-ordering circles, the following lemma shows, for a simplex $\sigma$ with~$\dim(\sigma) = i < d-1$,
    how to build a
    circle of directions $\gamma_\sigma$ so that any direction in it is
    perpendicular to at most two candidate cofacets of $\sigma$ simultaneously, and
    such candidate pairs share affine hulls.
    
    \begin{lemma}
    \label{lem:constructS}
        Let $K$ be a simplicial complex in $\R^d$ for $d \geq 2$, and let $\sigma
        \subseteq K$ be an $i$-simplex with $i < d-1$. Let $C_{i+1}$ denote the
        candidate cofacets of $\sigma$. If $K_i \cup C_{i+1}$ is locally injective,
        then a circle of directions $\gamma_\sigma$ maximally perpendicular to $\sigma$ exists so
        that whenever a direction of $\gamma_\sigma$ is normal to two candidate cofacets~$\tau$ and $\tau'$ of
        $\sigma$, the affine hulls of $\tau$ and $\tau'$ are equal.
    \end{lemma}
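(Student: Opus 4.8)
The plan is to build $\gamma_\sigma$ explicitly as a great circle of $\sph^{d-1}$ lying inside the $(d-i-1)$-sphere $\perp_\sigma$, chosen so that it avoids all the ``bad'' directions that would be normal to two candidate cofacets of $\sigma$ whose affine hulls differ. First I would set up coordinates: write $\aff(\sigma)$ as an $i$-flat, and identify $\perp_\sigma$ with the unit sphere in the orthogonal complement $W \cong \R^{d-i}$. For each candidate cofacet $\tau = \sigma \cup \{v\}$, the directions in $\perp_\sigma$ that are normal to $\tau$ (i.e., perpendicular to $\aff(\tau)$ as well) form the intersection of $\perp_\sigma$ with the hyperplane $\{s \in W : s \cdot (v - c) = 0\}$ for a basepoint $c \in \aff(\sigma)$; call this a \emph{normal-set} $N_\tau$, a subsphere of $\perp_\sigma$ of dimension $d-i-2$. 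By \cref{lem:twosides}, two candidate cofacets have equal affine hull iff their normal-sets coincide, and then they are an antipodal pair about $\aff(\sigma)$ (opposite $v$'s in $W$ give the same hyperplane). So the finitely many distinct normal-sets partition the candidate cofacets into classes; I want a great circle $\gamma_\sigma \subseteq \perp_\sigma$ that meets each normal-set in the ``expected'' way, never hitting two distinct normal-sets at a common point.

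The key step is the avoidance argument. For each unordered pair of candidate cofacets $\tau, \tau'$ with $\aff(\tau) \neq \aff(\tau')$, the intersection $N_\tau \cap N_{\tau'}$ is a proper subsphere of $\perp_\sigma$ of dimension at most $d-i-3$ (it is cut out by two genuinely different linear conditions on $W$). There are finitely many such pairs, so $B := \bigcup_{\tau,\tau'} (N_\tau \cap N_{\tau'})$ is a finite union of subspheres each of codimension at least $2$ in $\perp_\sigma$. A great circle of $\perp_\sigma$ through two generic antipodal points misses $B$: the space of great circles of a sphere of dimension $k := d-i-1 \geq 1$ is parameterized by an open dense subset of a Grassmannian, and the circles meeting a fixed codimension-$\geq 2$ subsphere form a lower-dimensional (hence measure-zero / nowhere-dense) subset — so a generic great circle avoids $B$ entirely. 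When $k=1$, $\perp_\sigma$ is itself a circle, $B$ is a finite set of points, and since $d-i-3 < 0$ forces $B = \emptyset$, any great circle (the whole of $\perp_\sigma$) works; I would note this boundary case separately. Pick $\gamma_\sigma$ to be such a generic great circle. Then it is maximally perpendicular to $\sigma$ by construction (it lies in $\perp_\sigma$ and $\dim \aff(\sigma) = i < d-1$, so condition (1) of the definition applies), and if some direction $s \in \gamma_\sigma$ is normal to both $\tau$ and $\tau'$ then $s \in N_\tau \cap N_{\tau'}$; since $s \notin B$ this is impossible unless $\aff(\tau) = \aff(\tau')$, which is exactly the claim.

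One technical point to nail down is the passage from ``$s$ is the $\gamma_\sigma$-normal of $v$ relative to $\sigma$'' (the notion actually used downstream) to ``$s$ is perpendicular to $\aff(\tau)$'': by \cref{def:angle}, when $\dim\aff(\sigma) < d-1$ the $\gamma_\sigma$-normal of $v$ relative to $\sigma$ is obtained by orthogonally projecting $v$ into the plane of $\gamma_\sigma$ and reading off the angle, and this direction is perpendicular to $\sigma$ and to $v-c$, hence perpendicular to $\aff(\tau)$ — so a direction of $\gamma_\sigma$ being ``normal to $\tau$'' in the statement is precisely the condition $s \in N_\tau$ I work with. I expect the genericity/avoidance step to be the main obstacle: one must argue carefully that the set of great circles hitting a codimension-$\geq 2$ subsphere is ``small'' in an appropriate sense (e.g. measure zero with respect to the natural $O(d-i)$-invariant measure on great circles, or meager), so that a good $\gamma_\sigma$ exists; everything else is routine linear algebra and the already-established \cref{lem:twosides}.
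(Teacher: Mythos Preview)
Your argument is correct and complete, but it is organized differently from the paper's. You work \emph{dually}, inside the $(d-i-1)$-sphere $\perp_\sigma$: each candidate cofacet $\tau$ determines a codimension-one subsphere $N_\tau\subset\perp_\sigma$, two cofacets with distinct affine hulls give $N_\tau\cap N_{\tau'}$ of codimension at least two, and a generic great circle in $\perp_\sigma$ (a generic $2$-plane in $W$) avoids the finite union $B$ of these intersections by a dimension count in the Grassmannian. The paper instead works \emph{primally} in $\R^d$: it forms the arrangement of affine flats $\aff(\tau\cup\tau')$ (each of dimension at most $i+2<d$), picks $d-i-2$ auxiliary points $P$ off this arrangement so that $\aff(\sigma\cup P)$ is a $(d-2)$-flat, and takes $\gamma_\sigma$ to be the unique circle perpendicular to that flat; the contradiction then comes from counting affinely independent points in the hyperplane normal to a putative bad direction. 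The two are essentially dual versions of the same genericity argument. The paper's version is more constructive---it tells you how to actually build $\gamma_\sigma$ by picking points one at a time, which matters later when a candidate-ordering circle must be computed---whereas your version is conceptually cleaner but leans on the (standard but slightly heavier) fact that the Schubert-type locus of $2$-planes meeting a fixed codimension-$2$ subspace is lower-dimensional. One small cleanup: the equivalence ``$\aff(\tau)=\aff(\tau')$ iff $N_\tau=N_{\tau'}$'' is straight linear algebra and does not need \cref{lem:twosides}; that lemma is only used for the ``opposite sides'' remark, which your argument does not actually require.
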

    \begin{proof}
        We show the existence of $\gamma_\sigma$ constructively. First, suppose
        $i=d-2$. Then $\gamma_\sigma = \perp_\sigma$ is a maximally perpendicular
        circle, and the affine hulls of candidate cofacets of $\sigma$ are~$(d-1)$-planes normal to some direction in $\gamma_\sigma$. By
        \cref{lem:twosides}, these planes contain at most two candidate cofacets and
        the claim is satisfied.
    
        Next, consider $i < d-2$. Let $\tau$ and $\tau'$ be candidate
        cofacets of $\sigma$. Then $\aff(\tau \cup \tau')$ is~$(i+2)$-dimensional
        (if $\aff(\tau) \neq \aff(\tau')$), or else is $(i+1)$-dimensional (if
        $\aff(\tau) = \aff(\tau')$). In particular, the arrangement of the planes
        $\aff(\tau \cup \tau')$ for every pair of candidate cofacets $\tau$
        and~$\tau'$ is less than $d$-dimensional. Then we can choose a set of
        $d-i-2$
        points not contained in this arrangement, which we denote $P$, in such a way
        so that $\aff(\sigma \cup \{P\})$ is $(d-2)$-dimensional. Let
        $\gamma_\sigma$ be the unique $\sph^1$ of directions (maximally)
        perpendicular to~$\aff(\sigma \cup \{P\})$.
        
        Now, suppose there is some $s \in \gamma_\sigma$ perpendicular to two candidate cofacets, $\sigma\cup \{v\}$ and~$\sigma \cup \{v'\}$. That is, for all $p \in \sigma \cup \{P\}$, we have
        $s \cdot v=s \cdot v' = s \cdot p$. Then $\sigma\cup \{P\} \cup \{v, v'\}$
        is a set of $(d-1) + 2 = d+1$ points that all lie on the same $(d-1)$-plane (normal to $s$). But by construction,
        $\dim(\aff(\sigma \cup \{P\})) = d-2$, so it must be that $\aff(\tau) =
        \aff(\tau')$.
    \end{proof}
    
    See \cref{fig:candidate-ordering} for a low-dimensional illustration of the construction described above.

    \begin{figure}[h]
        \centering
        \includegraphics{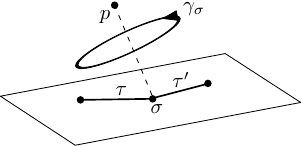}
        \caption{
        Here, $\sigma$ is a vertex in $\R^3$. Then we choose $P= \{p\}$ to avoid
        the affine hull of the single pair of (candidate) cofacets of
        $\sigma$. The resulting circle $\gamma_\sigma$ has no direction
        orthogonal to both $\tau$ and $\tau'$; this would only occur if $\tau$
        and $\tau'$ were colinear. The arrow on $\gamma_\sigma$ hints at our
        eventual aim; $\gamma_\sigma$ defines a total radial order on the
        candidate cofacets of $\sigma$.}
        \label{fig:candidate-ordering}
    \end{figure}

    We combine Lemmas~\ref{lem:twosides} and~\ref{lem:constructS} to show that a candidate-ordering circle always exists.

    \begin{restatable}{lemma}{order}
    \label{lem:order}
        Let $K$ be a simplicial complex, let $\sigma$ be an $i$-simplex for $i \leq d-1$, and suppose
        that $K$ satisfies \cref{ass:reconstruction}. Then there exists a
        candidate-ordering $\gamma_\sigma$.
    \end{restatable}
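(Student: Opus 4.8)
The plan is to reduce to the circle produced by \cref{lem:constructS} and then check that it actually assigns distinct normals, handling the top dimension $i=d-1$ separately. First I would dispose of the trivial case: if $\sigma$ has no candidate cofacets then it has no candidate vertices, so any maximally perpendicular circle (one exists since $\dim(\aff(\sigma))\le d-1<d$ by \cref{ass:perp}) is vacuously candidate-ordering. So assume $\sigma$ has a candidate cofacet $\tau$. By \cref{ass:reconstruction}, $\dim(\tau)=\dim(\aff(\tau))=i+1$, and arguing as in the opening of the proof of \cref{lem:twosides}, local injectivity of $K_i\cup\cand(\sigma)$ forces $\dim(\aff(\sigma))=\dim(\aff(\tau))-1=i$. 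Since $i\le d-1$, either $i=d-1$, so $\aff(\sigma)$ is a hyperplane, or $i<d-1$.

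In the case $i=d-1$, every candidate cofacet $\tau'$ has $\dim(\aff(\tau'))=d$, hence $\aff(\tau')=\R^d$, so all candidate cofacets of $\sigma$ share an affine hull; by \cref{lem:twosides} there are at most two of them, and if there are two they are separated by $\aff(\sigma)$. I would take $\gamma_\sigma$ to be any maximally perpendicular circle, parameterized so that $\gamma_\sigma(0)=s$ for one of the two directions $s\perp\sigma$. Each candidate vertex $v$ has $\sigma\cup\{v\}$ full-dimensional, so $v\notin\aff(\sigma)$ and $s\cdot v\ne s\cdot\sigma$; by \cref{def:angle} the $\gamma_\sigma$-normal of $v$ is $s$ if $s\cdot v>s\cdot\sigma$ and $-s$ otherwise. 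Since the two candidate cofacets (if both present) lie on opposite sides of $\aff(\sigma)$, their candidate vertices receive the distinct normals $s$ and $-s$; with a single candidate vertex uniqueness is immediate. Hence $\gamma_\sigma$ is candidate-ordering.

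In the case $i<d-1$, I would apply \cref{lem:constructS} (its hypotheses hold: $K_i\cup\cand(\sigma)$ is locally injective and $i<d-1$) to obtain a maximally perpendicular circle $\gamma_\sigma$ such that any direction of $\gamma_\sigma$ normal to two candidate cofacets of $\sigma$ makes their affine hulls coincide; note $\gamma_\sigma\subseteq\perp_\sigma$ since $\dim(\aff(\sigma))=i<d-1$. Suppose, for contradiction, that distinct candidate vertices $v\ne v'$ share a $\gamma_\sigma$-normal $\gamma_\sigma(\alpha)$. By the observation following \cref{def:angle}, $\gamma_\sigma(\alpha)$ is the exterior normal of the \emph{unique} halfspace whose boundary hyperplane $H$ contains $\aff(\sigma)$ (unique because $\gamma_\sigma(\alpha)\perp\sigma$ fixes the $\gamma_\sigma(\alpha)$-coordinate of $H$), and both $v$ and $v'$ lie on $H$. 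Then $\sigma\cup\{v\}$ and $\sigma\cup\{v'\}$ both lie in $H$, so $\gamma_\sigma(\alpha)$ is normal to both candidate cofacets, and \cref{lem:constructS} yields $\aff(\sigma\cup\{v\})=\aff(\sigma\cup\{v'\})$. By \cref{lem:twosides}, $v$ and $v'$ then lie on opposite sides of $\aff(\sigma)$ inside this common $(i+1)$-plane; but the remark following \cref{def:angle} says that two points on opposite sides of $\aff(\sigma)$ with equal affine hull over $\sigma$ have antipodal $\gamma_\sigma$-normals, hence distinct ones --- a contradiction. So no two candidate vertices share a $\gamma_\sigma$-normal, and $\gamma_\sigma$ is candidate-ordering.

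I expect the $i<d-1$ case to be the main obstacle: the delicate step is converting ``two candidate vertices have equal $\gamma_\sigma$-normal'' into ``one direction of $\gamma_\sigma$ is normal to two candidate cofacets'' (so that \cref{lem:constructS} applies), via uniqueness of the supporting halfspace through $\aff(\sigma)$ with a prescribed normal, and then eliminating the residual shared-affine-hull case using the antipodality of $\gamma$-normals of points on opposite sides of $\aff(\sigma)$.
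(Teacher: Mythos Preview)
Your proposal is correct and follows essentially the same route as the paper: handle $i=d-1$ via \cref{lem:twosides} directly, and for $i<d-1$ invoke \cref{lem:constructS}, then use \cref{lem:twosides} plus the antipodality remark after \cref{def:angle} to rule out coinciding $\gamma_\sigma$-normals. You are simply more explicit than the paper about the bridging step from ``same $\gamma_\sigma$-normal'' to ``a direction of $\gamma_\sigma$ is normal to both cofacets'' (via the boundary-hyperplane observation following \cref{def:angle}), and you add the harmless preliminaries of the empty-candidate case and the verification that $\dim(\aff(\sigma))=i$; the paper leaves these implicit.
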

    
    \begin{proof}
        We proceed with a constructive argument. If $i=d-1$, then by
        \cref{lem:twosides}, $\sigma$ can have at most two candidate cofacets,
        contained on opposite sides of $\sigma$.
        Then, for any circle $\gamma_\sigma$ of directions maximally perpendicular to~$\sigma$, the $\gamma_\sigma$-normal relative to $\sigma$ of one candidate cofacet is $\gamma_\sigma(0)$, and that of the other is~$\gamma_\sigma(\pi)$; these are distinct, so the claim holds.
    
        Next, suppose $i < d-1$, and suppose that $\gamma_\sigma \subseteq \perp_\sigma$ is
        the circle maximally perpendicular to $\sigma$, constructed as in the proof of \cref{lem:constructS}. 
        Choose an angular parametrization of $\gamma_\sigma$ by $[0, 2\pi)$.
        By \cref{lem:constructS}, whenever a direction $\gamma_\sigma(\alpha)$ is
        orthogonal to multiple candidate cofacets, they share affine hulls. By
        \cref{lem:twosides}, such candidate cofacets lie on opposite sides of
        $\sigma$, and can only come in pairs. Thus, such candidate cofacet pairs have opposite (and distinct) $\gamma_\sigma$-normals relative to $\sigma$.
        Thus, $\gamma_\sigma$ is a candidate-ordering circle for $\sigma$.
    \end{proof}

       In \cref{sec:reconstruct}, we consider sweeping orders where, for each pair $(\sigma, s)$, the direction~$s$ is part of some candidate-ordering circle around $\sigma$.
    We show that this requirement is satisfiable:

    \begin{restatable}{lemma}{orderingordering}
    \label{lem:orderingordering}
        Let $i \leq d$ and let $K$ be a simplicial complex in $\R^d$ that is in general position for $(i'+1)$-reconstruction, for all $i' \leq i \leq d-1$.
        There exists a sweeping order $((\sigma_j, s_j))_{j=1}^{n_i}$ where, for all $1 \leq j \leq n_i$, 
        there is a candidate-ordering circle around $\sigma_j$ that contains $s_j$.
    \end{restatable}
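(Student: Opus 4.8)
The statement asks for the existence of a sweeping order $((\sigma_j,s_j))_{j=1}^{n_i}$ in which every paired direction $s_j$ lies on \emph{some} candidate-ordering circle around $\sigma_j$. The natural approach is to show that the sweeping order produced by \cref{alg:sweep} (or rather its circle-reporting variant \textsc{Order$^\circ$}) can be made to have this property, by choosing the maximally perpendicular circles $\gamma_\rho$ used in \alglnref{sweep:gamma} more carefully --- not as arbitrary circles through $s_\rho$, but as circles that are simultaneously candidate-ordering. The whole argument is then an induction on $i$ that threads a stronger invariant through the recursion.

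\textbf{Setup of the induction.} For the base case $i=0$, every direction is perpendicular to a vertex, and by \cref{lem:order} (applied with the vertex as the $0$-simplex $\sigma$) there is a candidate-ordering circle around each vertex; but we need a \emph{single} direction $s$ that simultaneously lies on a candidate-ordering circle around \emph{every} vertex of $K_0$. This is where a small general-position argument is needed: for a fixed vertex $v$, the set of directions $s$ that fail to extend to a candidate-ordering circle around $v$ is contained in a finite union of lower-dimensional subsets of $\sph^{d-1}$ (the ``bad'' directions are those forcing a coincidence of $\gamma$-normals among the finitely many candidate cofacets of $v$), so a generic $s$ works for all $n_0$ vertices at once. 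Then run \Call{Order}{$K_0$} with this $s$. For the inductive step, assume we have a sweeping order $\SSeq_{i-1} = ((\rho_k,s_k))_{k}$ of $K_{i-1}$ with the candidate-ordering property. When \cref{alg:sweep} processes $(\rho,s_\rho)$, instead of taking an arbitrary circle through $s_\rho$, invoke \cref{lem:constructS}/\cref{lem:order}: among the circles maximally perpendicular to $\rho$ there is one, call it $\gamma_\rho$, that is candidate-ordering for $\rho$; moreover (and this is the crucial refinement) the freedom in the construction of $\gamma_\rho$ in \cref{lem:constructS} --- choosing the $d-i-1$ auxiliary points $P$ avoiding a finite arrangement --- lets us also require $s_\rho \in \gamma_\rho$. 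So we can run \cref{alg:sweep} with this choice of $\gamma_\rho$ on each iteration; the output is still a sweeping order (Lemmas~\ref{lem:perpendicular} and~\ref{lem:halfspace} only used that $\gamma_\rho$ is maximally perpendicular and passes through $s_\rho$, both of which still hold).

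\textbf{Verifying the property for the new order.} Each output pair has the form $(\sigma, s_\sigma) = (\rho\cup\{v\}, \gamma_\rho(\alpha_v))$. I must exhibit a candidate-ordering circle around $\sigma$ containing $s_\sigma$. Here I would argue: $s_\sigma = \gamma_\rho(\alpha_v)$ is a direction perpendicular to $\sigma$ (by \cref{lem:perpendicular}), and by \cref{lem:order} applied to the $i$-simplex $\sigma$ (which is legitimate since $K$ is in general position for $(i+1)$-reconstruction, i.e., $i \le d-1$) there is a candidate-ordering circle $\gamma_\sigma$ around $\sigma$; I need one through the specific direction $s_\sigma$. As in the base case, for the fixed simplex $\sigma$ the directions perpendicular to $\sigma$ that do \emph{not} lie on any candidate-ordering circle around $\sigma$ form a lower-dimensional subset of $\perp_\sigma$ (when $\dim(\aff\sigma)<d-1$) or are vacuous (when $\dim(\aff\sigma)=d-1$, where \emph{every} maximally perpendicular circle is candidate-ordering by the first paragraph of \cref{lem:order}'s proof). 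So either $s_\sigma$ is already good, or --- and this is the only genuinely delicate point --- we must arrange during the construction that $s_\sigma$ avoids the bad set. The clean way to handle this is to fold the constraint into the choice of $\gamma_\rho$: since the $\alpha_v$ are determined by $\gamma_\rho$ and $v$, and there are only finitely many cofacets $v$, the set of circles $\gamma_\rho$ through $s_\rho$ for which some resulting $s_\sigma$ lands in a bad set is again a measure-zero (finite union of lower-dimensional) family in the parameter space of admissible circles, so a generic admissible $\gamma_\rho$ avoids all of them while remaining candidate-ordering for $\rho$.

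\textbf{Main obstacle.} The delicate part is the simultaneity of three demands on each $\gamma_\rho$: it must be maximally perpendicular to $\rho$, it must pass through the prescribed $s_\rho$, it must be candidate-ordering for $\rho$, \emph{and} it must make each induced direction $s_\sigma=\gamma_\rho(\alpha_v)$ extend to a candidate-ordering circle around the corresponding cofacet $\sigma$. I expect to spend most of the proof showing that the set of ``bad'' choices (in the parameter space of the auxiliary point set $P$ from \cref{lem:constructS}, whose choice determines $\gamma_\rho$) is a finite union of lower-dimensional algebraic subsets, hence avoidable --- this is a standard but slightly fiddly general-position / genericity argument, and the bookkeeping over all $n_{i-1}$ iterations and all cofacets is where care is required. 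The existence claims themselves (candidate-ordering circles exist at all) are already handed to us by Lemmas~\ref{lem:twosides}, \ref{lem:constructS}, and \ref{lem:order}; the new content is only the compatibility of those choices with the constraints inherited from the sweeping-order recursion.
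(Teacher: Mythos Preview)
Your overall architecture—induction on $i$, running \cref{alg:sweep} with each $\gamma_\rho$ chosen to be candidate-ordering for $\rho$ and passing through $s_\rho$—matches the paper. The divergence is in how you certify that each output direction $s_\sigma=\gamma_\rho(\alpha_v)$ lies on a candidate-ordering circle for the cofacet $\sigma=\rho\cup\{v\}$.

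The paper does \emph{not} use a genericity argument here. Instead it proves a short combinatorial lemma (\cref{lem:sharedcandidates}): every candidate vertex of $\sigma$ is already a candidate vertex of its facet $\rho$. With that in hand, the verification is immediate: if $s_\sigma$ were bad for $\sigma$—i.e.\ two candidate vertices $v_1,v_2$ of $\sigma$ have the same $s_\sigma$-coordinate as $\sigma$ while $\aff(\sigma\cup\{v_1\})\neq\aff(\sigma\cup\{v_2\})$—then $v_1,v_2$ are also candidate vertices of $\rho$, share the $s_\sigma$-coordinate of $\rho$, and since $s_\sigma\in\gamma_\rho$ this contradicts $\gamma_\rho$ being candidate-ordering for $\rho$. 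Thus once $\gamma_\rho$ is candidate-ordering for $\rho$, \emph{no further constraint} on $\gamma_\rho$ is needed; the property for $\sigma$ comes for free.

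Your route instead piles a fourth requirement onto $\gamma_\rho$ (that every induced $s_\sigma$ avoid the bad set inside $\perp_\sigma$) and appeals to measure-zero avoidance in the parameter space of admissible circles. This is plausible but genuinely more work: you must check that the map $\gamma_\rho\mapsto s_\sigma$ actually varies enough to escape each bad set, which is not automatic—for instance when $\dim(\aff\rho)=d-2$ there is a \emph{unique} maximally perpendicular circle and hence no freedom at all (you handle this via the $\dim(\aff\sigma)=d-1$ escape hatch, but the general dimension-counting still needs to be carried out). The paper's shared-candidates observation short-circuits all of this bookkeeping; it is worth isolating as a lemma rather than absorbing the constraint into a genericity argument.
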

    
    As an immediate consequence of the proof of this \cref{lem:orderingordering}, we see that computing this particular
    type of sweeping order via \cref{alg:sweep} is possible.

    \begin{restatable}{corollary}{orderingorderingcor}
        \label{cor:orderingordering}
        If we input a candidate-ordering-compatible sweeping order to \cref{alg:sweep}, the output is also candidate-ordering-compatible.
    \end{restatable}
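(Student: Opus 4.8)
The plan is to read the proof of \cref{lem:orderingordering} ``in reverse''. That proof builds the desired sweeping order of $K_i$ inductively, and its inductive step is exactly the following: given a candidate-ordering-compatible sweeping order $\SSeq_{i-1}$ of $K_{i-1}$, a suitable run of \Call{Order}{$K_i,\SSeq_{i-1}$} outputs a candidate-ordering-compatible sweeping order of $K_i$. So I would take $\SSeq_{i-1}$ candidate-ordering-compatible, fix an arbitrary output pair $(\sigma,s_\sigma)$ of \Call{Order}{$K_i,\SSeq_{i-1}$} --- so $\sigma=\rho\cup\{v\}$ for some $(i-1)$-simplex $\rho$ and $s_\sigma=\gamma_\rho(\alpha_v)$ is the $\gamma_\rho$-normal of $v$ relative to $\rho$, hence $s_\sigma\in\perp_\sigma$ by \cref{lem:perpendicular} --- and exhibit a candidate-ordering circle around $\sigma$ containing $s_\sigma$.

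I would argue along the same case split as Lemmas~\ref{lem:twosides}--\ref{lem:order}. If $\dim(\aff(\sigma))=d-1$, then $s_\sigma$ is one of the two directions of $\perp_\sigma$, every maximally perpendicular circle $\gamma_\sigma$ contains $\perp_\sigma$ (hence $s_\sigma$), and by \cref{lem:twosides} the at most two candidate cofacets of $\sigma$ lie on opposite sides of $\aff(\sigma)$, so their $\gamma_\sigma$-normals are $\gamma_\sigma(0)$ and $\gamma_\sigma(\pi)$ and $\gamma_\sigma$ is candidate-ordering. If $\dim(\aff(\sigma))=d-2$, then $\perp_\sigma$ is itself the unique maximally perpendicular circle, it contains $s_\sigma$, and the constructions in the proofs of Lemmas~\ref{lem:constructS} and~\ref{lem:order} (which require no choices in this dimension) show it is candidate-ordering. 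The substantive case is $\dim(\aff(\sigma))<d-2$: here I would repeat the construction in the proof of \cref{lem:constructS}, but with the extra requirement that the auxiliary point set $P$ (of size $d-\dim(\aff(\sigma))-2$) lie inside the hyperplane $H$ through $\aff(\sigma)$ with normal $s_\sigma$. Then $\aff(\sigma\cup P)\subseteq H$, so $s_\sigma$ is perpendicular to $\aff(\sigma\cup P)$ and therefore lies on the maximally perpendicular circle $\gamma_\sigma:=\perp_{\aff(\sigma\cup P)}$; and a generic choice of $P$ inside $H$ still makes $\dim(\aff(\sigma\cup P))=d-2$ and avoids the finite arrangement of planes $\aff(\tau\cup\tau')$ over pairs $\tau,\tau'$ of candidate cofacets of $\sigma$, so by Lemmas~\ref{lem:constructS} and~\ref{lem:order} this $\gamma_\sigma$ is candidate-ordering.

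The main obstacle lies in the words ``a generic choice of $P$ inside $H$'': this requires that no plane $\aff(\tau\cup\tau')$ contains all of $H$. Each such plane has dimension at most $\dim(\aff(\sigma))+2\le d-1=\dim(H)$, so this can only fail when $\dim(\aff(\sigma))=d-3$ and two candidate vertices of $\sigma$ spanning distinct cofacet affine hulls both lie in $H$ --- equivalently, three or more candidate vertices of $\sigma$ become coplanar with $\sigma$ in the hyperplane normal to $s_\sigma$. I would rule this out by using the freedom in the choice of $\gamma_\rho$: whenever $\dim(\aff(\sigma))\le d-3$ we also have $\dim(\aff(\rho))\le d-3<d-2$, so $\perp_\rho$ is a sphere of dimension at least two and there is a positive-dimensional family of maximally perpendicular circles through the direction $s_\rho$ paired with $\rho$; choosing $\gamma_\rho$ outside the finitely many lower-dimensional ``bad'' subfamilies (one for each cofacet $\sigma$ of $\rho$ and each offending pair of candidate cofacets of $\sigma$) makes every direction $s_\sigma$ it produces avoid the degenerate hyperplanes. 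Checking that these bad subfamilies really are lower-dimensional --- so that such a choice of $\gamma_\rho$ exists, and hence the whole construction is a legitimate run of \cref{alg:sweep} --- is the step I expect to demand the most care; the remainder is dimension counting together with appeals to Lemmas~\ref{lem:twosides}--\ref{lem:order}.
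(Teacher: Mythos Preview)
Your opening paragraph already contains the whole proof the paper gives: the corollary \emph{is} the inductive step of the proof of \cref{lem:orderingordering}, and the paper simply says so. Where you diverge is in how you then argue that inductive step. The paper does not rebuild a candidate-ordering circle for $\sigma$ from scratch; instead it proves a short auxiliary lemma (in the appendix) stating that every candidate vertex of $\sigma$ is also a candidate vertex of its facet $\rho$. With this in hand the argument is a two-line contradiction: if no candidate-ordering circle for $\sigma$ passed through $s_\sigma$, there would be two candidate vertices $v_1,v_2$ of $\sigma$ with the same $s_\sigma$-height as $\sigma$ and distinct cofacet hulls; but then $v_1,v_2$ are candidate vertices of $\rho$ at the same $s_\sigma$-height as $\rho$, and since $s_\sigma\in\gamma_\rho$ this contradicts that the input circle $\gamma_\rho$ was candidate-ordering for $\rho$. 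No new circle is constructed, and the only requirement on the run of \cref{alg:sweep} is that on \alglnref{sweep:gamma} one uses the candidate-ordering circles supplied with the input.

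Your route---redoing the construction of \cref{lem:constructS} inside the hyperplane $H=\{x:s_\sigma\cdot x=s_\sigma\cdot\sigma\}$---can be made to work, but it is strictly harder and, as you note, leaves a genericity step to be verified. More importantly, your fix for the obstruction (``use the freedom in the choice of $\gamma_\rho$'') imposes extra constraints on $\gamma_\rho$ beyond being candidate-ordering for $\rho$: you now need $\gamma_\rho$ to avoid, for every cofacet $\sigma$ of $\rho$ and every bad pair of candidate cofacets of $\sigma$, a specific degenerate family. That changes the content of the corollary from ``candidate-ordering-compatible in $\Rightarrow$ candidate-ordering-compatible out'' to ``a sufficiently generic candidate-ordering-compatible input works''. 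The inheritance lemma removes precisely this complication: the obstruction you isolate (two candidate vertices of $\sigma$ lying in $H$) is already forbidden by the assumption that $\gamma_\rho$ was candidate-ordering for $\rho$, so no extra genericity is needed.
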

    
    Thus, we can (and will) reuse the circles used to compute a sweeping order for $K_i$ to order candidates cofacets of $i$-simplices; recall that, to annotate a sweeping order with these circles, we call the adapted ``circle-reporting'' \cref{alg:sweep}, \textsc{Order$^\circ$}.

\section{Simplicial Complex Reconstruction}        
    \label{sec:reconstruct}
    In this section, we leverage the properties of our sweeping order to the task of simplicial complex reconstruction, thereby generalizing the edge reconstruction
    algorithm of~\cite{graphsearch}.
    Just as in~\cite{graphsearch}, we sweep through the~$i$-simplices, finding all cofacets above a given $i$-simplex at each step, maintaining that all cofacets below have already been found. 
    We iterate the process of reconstructing $K_{i+1}$ from $K_i$ until we reconstruct $K$. 
    Since a sweeping order for $K_{i+1}$ is computed from $K_i$, the orders we utilize are computed along the way.
    
\subsection{Finding Cofacets above a Single \texorpdfstring{$i$}{i}-Simplex}
\label{sec:upcofacets}
    First, we discuss our central subroutine; identifying all cofacets of and above
    a single $i$-simplex~$\sigma$, supposing that all cofacets of and below $\sigma$
    have already been found.
    Our main tool is the following function, which
    counts cofacets of a simplex in the closed halfspace below the simplex with respect
    to some direction.

    \begin{definition}[Indegree]
        \label{def:indegree}
        Given $\sigma$, some simplex of a simplicial complex $K$, and $s \in
        \perp_\sigma$, {\sc{Indeg}$(\sigma, s)$} returns the number of cofacets
        of $\sigma$ that have no vertex higher than $\sigma$ with respect to
        direction $s$. That is, ${\sc{Indeg}}(\sigma, s) = \lvert
        \cldowncofacets{s}{\sigma}\rvert$.
    \end{definition}
 
    For now, we assume {\sc{Indeg$(\sigma, s)$}} as a well-defined subroutine; we
    provide further discussion about its actual existing implementations and
    corresponding limitations in \cref{append:discretization}.

    \begin{figure}[b]
        \centering
        \includegraphics{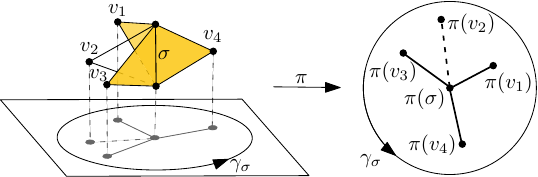}
        \caption{
        Projecting $\sigma$ and its candidate cofacets to the plane
        containing a candidate-ordering circle $\gamma_\sigma$ results in a star
        graph with the image of $\sigma$ at the center, and the order of the three candidate
        vertices in the image remains unchanged. Note that
        $\sigma \cup \{v_2\}$ is not an actual cofacet of $\sigma$; then, despite
        appearing as an edge in the projection, it does not contribute to an
        indegree count.
        }
        \label{fig:projection}
    \end{figure}

    We present an algorithm {\sc{FindUnfound}} (\cref{alg:findunfound}), inspired by the algorithm \textsc{UpEdges}~\cite[Alg 2]{graphsearch}, which addresses the specific case $i=0$.%
    \footnote{Using a single direction to mean ``up,'' an edge adjacent to and above a vertex $v$ will never be simultaneously strictly above its other endpoint. However, using the directions of our sweeping order as ``ups,'' an upper cofacet of an $i$-simplex $\sigma$ for $i>0$ may indeed be an upper cofacet of some other of its $i$-dimensional faces. 
    Thus, in higher-dimensional contexts, we may already
    know cofacets in the set~$\upcofacets{s}{\sigma}$; noting their presence and not ``re-finding'' them is more efficient than searching for everything in $\upcofacets{s}{\sigma}$. This is why we name \cref{alg:findunfound} {\sc{FindUnfound}} rather than, e.g., {\sc{UpCofacets}}.}
    Our algorithm improves upon the running time of {\sc{UpEdges}}, and simultaneously generalizes it to higher dimensions ($i>0$).%

    Given an $i$-simplex with $i<d-1$, and a candidate-ordering circle,
    the corresponding radial
    ordering of candidate cofacets behaves
    nearly identically to the radial ordering of edges incident to some central
    vertex. \Cref{fig:projection} highlights this connection.
    For the specific case~$i=d-1$, we can no
    longer rotate around the simplex, but a simple indegree check confirms the
    existence or absence of an upper cofacet.

    For the case $i < d-1$, we begin by radially sorting both the known cofacets
    of $\sigma$ (\textsc{Known}) and the candidate cofacets of $\sigma$ (\textsc{Cand}). Note that $\textsc{Known}$ is a subset of $\textsc{Cand}$.
    We denote the number of known cofacets of $\sigma$ that lie in the closed halfspace through $\sigma$ with exterior normal $s^*$ by $\textsc{\#Known}_{s^*}^\leq$.
    For each candidate cofacet, we store its $\gamma$-normal $s_c$ and its angle $\alpha_c$ along $\gamma$ in $O(d)$ time, and also precompute and store $\textsc{\#Known}_{s_c}^\leq$.
    As part of \cref{lem:findunfound}, we show how to do this in linear time.
    
    On \alglnref{findunfound:tobefound} we perform an initial indegree check in the $-s$ direction and subtract the number of already known cofacets $\#\textsc{Known}_{-s}^\leq$ in the upper halfspace.
    This counts how many cofacets of $\sigma$ are yet to be found, denoted \#\textsc{InitiallyUnfound}.
    Until this many cofacets have been found, we repeatedly perform a binary search for the next to-be-found cofacet of $\sigma$ (see \cref{fig:upcofacets}), which we then output, thus finding all cofacets of~$\sigma$.
    
    \begin{algorithm}
        \caption{Consider a simplex $\sigma\in K$, a direction $s \in \perp_\sigma$, a candidate-ordering circle $\gamma$ starting at $s$, and a list, \textsc{Known}, of vertices known to define cofacets of $\sigma$.
        If \textsc{Known} includes all vertices defining cofacets in $\downcofacets{s}{\sigma}$, then \Call{FindUnfound}{$\sigma,s,\gamma, \textsc{Known}$}
        finds all vertices of cofacets of $\sigma$ that were previously unknown (i.e., not in \textsc{Known}). }
        \label{alg:findunfound}
        \begin{algorithmic}[1]
        \Procedure{FindUnfound}{$\sigma, s, \gamma, \textsc{Known}$}
               \State $\textsc{Known} \gets$ $\textsc{Known}$ sorted increasingly by angle along $\gamma$
               \label{algln:findunfound:known}
               \State $\textsc{Cand} \gets$ the list of candidates of $\sigma$, sorted increasingly by their angle along $\gamma$
               \State for each candidate~$\textsc{Cand}[c]$, store its $\gamma$-normal $s_c=\gamma(\alpha_c)$ and corresponding angle $\alpha_c$
               \label{algln:findunfound:cand}
               \State $l \gets$ index of last candidate $\textsc{Cand}[c]$ with $\alpha_c\leq\pi$ (or $0$ if no such candidate exists)
                \label{algln:findunfound:lastuppercand}
               \State for each $\textsc{Cand}[c]$, precompute $\#\textsc{Known}^{\leq}_{s_c}$
               \label{algln:findunfound:precompute}
               \State \textsc{\#InitiallyUnfound} $\gets \Call{Indeg}{\sigma, -s} - \#\textsc{Known}_{-s}^{\leq}$
               \label{algln:findunfound:tobefound}
            \If{$\dim(\sigma) = d-1$ and $\#\textsc{InitiallyUnfound}>0$}
            \label{algln:findunfound:dminusoneif}
                \State \Output the first element of $\textsc{Cand}$
                \label{algln:findunfound:dminusone}
            \EndIf
            \If{$\dim(\sigma) < d-1$}
               \State \textsc{\#Found} $\gets 0$
               \label{algln:findunfound:foundinit}
               \While{$\textsc{\#Found}<\textsc{\#InitiallyUnfound}$}
               \label{algln:findunfound:outerwhilebegin}
                   \State $a \gets 1$
                   \State $b \gets l+1$
                   \While{$a+1<b$}
                   \label{algln:findunfound:innerwhilebegin}
                       \State $c \gets \lfloor{\frac{a+b}{2}}\rfloor-1$
                       \If{$\Call{Indeg}{\sigma, s_c} > \textsc{\#Found} + \# \textsc{Known}^{\leq}_{s_c}$} \Comment{not all of $\cldowncofacets{s_c}{\sigma}$ is found or known}
                       \label{algln:findunfound:innerifbegin}
                           \State $b \gets c+1$
                       \Else
                           \State $a \gets c+1$
                       \EndIf
                   \EndWhile
                   \label{algln:findunfound:innerwhileend}
                   \State $\textsc{\#Found} \gets \textsc{\#Found}+1$
                   \label{algln:findunfound:foundplusone}
                   \State \Output $\textsc{Cand}[a]$
               \EndWhile
               \label{algln:findunfound:outerwhileend}
            \EndIf
        \EndProcedure
        \end{algorithmic}
    \end{algorithm}
    
    \begin{figure}
	    \centering
		\includegraphics{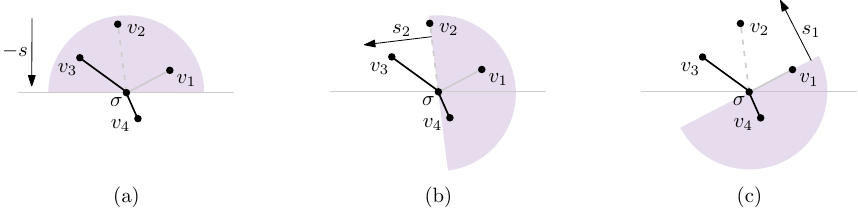}
		\caption{
        The first steps of {\sc{FindUnfound}}, where $\textsc{Known} = [v_3, v_4]$ and where $\textsc{Cand} = [v_1, v_2, v_3, v_4]$, out of which only $v_3$ does not form a cofacet with $\sigma$. We begin in (a) by computing $\#\textsc{InitiallyUnfound} = \textsc{Indeg}(-s, \sigma) - \textsc{Known}_{-s}^\leq = 2-1=1$, i.e., we know there is one cofacet of sigma ``to be found'' in our search space. In (b), we begin our binary search, finding $\textsc{Indeg}(\sigma, s_2) = 2 >  0+1 = \#\textsc{Found} - \#\textsc{Known}_{s_2}^\leq$. We then continue the binary search in (c), finding $\textsc{Indeg}(\sigma, s_1) =2 > 0+1 = \#\textsc{Found} - \#\textsc{Known}_{s_1}^\leq$. Updating $b=3$ to $b=2$ breaks the while condition, so we add one to $\#\textsc{Found}$ and output $\textsc{Cand}[1]= \sigma \cup \{v_1\}$.
        Since $\#\textsc{Found} = \#\textsc{InitiallyUnfound}$, we know we have found all cofacets of $\sigma$ and the algorithm ends.
        }
	\label{fig:upcofacets}
    \end{figure}

    \begin{lemma}
    \label{lem:findunfound}
        For $\sigma$, an $i$-simplex of a simplicial complex $K$, $s \in
        \perp_\sigma$, and $\gamma$, a candidate ordering circle for $\sigma$
        containing $s$, and \textsc{Known}, a list of vertices known to define
        cofacets of $\sigma$.
        Let $U = \#\textsc{InitiallyUnfound}$, and $I$ be the time to query the indegree of $\sigma$ in a given direction.
        If \textsc{Known} includes all vertices defining
        cofacets in $\downcofacets{s}{\sigma}$, then $\Call{FindUnfound}{\sigma, s, \gamma, \textsc{Known}}$ (\cref{alg:findunfound}) 
        \begin{enumerate}
            \item is correct: it finds all vertices of cofacets of $\sigma$ that were previously unknown (i.e., not in \textsc{Known});
            \item runs in $O(n_0d+I+(n_0+UI)\log n_0)$ time and performs $O(1+U\log n_0)$ indegree queries.
        \end{enumerate}
    \end{lemma}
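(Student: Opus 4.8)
The plan is to translate the problem into a one-dimensional angular picture around $\sigma$, in which $\textsc{FindUnfound}$ becomes a binary-search version of a radial sweep of the cofacets of $\sigma$. Work first in the main case $\dim(\aff(\sigma)) < d-1$, so $\gamma \subseteq \perp_\sigma$. Fix $o \in \aff(\sigma)$, let $\Pi$ be the $2$-plane through $o$ spanned by $\gamma$, and choose an orthonormal basis $s,s'$ of its direction space so that $\gamma(\theta) = \cos\theta\,s + \sin\theta\,s'$. Then $\sigma$ projects orthogonally onto $\Pi$ to the single point $o$, and a candidate cofacet $\sigma\cup\{v\}$ (full-dimensional by \cref{ass:reconstruction}, so $v\notin\aff(\sigma)$) projects to a ray from $o$ whose direction is recorded by the angle $\alpha_v$ with $\gamma(\alpha_v)$ the $\gamma$-normal of $v$ relative to $\sigma$; a short trigonometric computation gives $\gamma(\beta)\cdot v\le\gamma(\beta)\cdot\sigma \iff \sin(\alpha_v-\beta)\le0$, i.e.\ iff $\alpha_v$ lies in the half-turn $[\beta-\pi,\beta]$ taken mod $2\pi$. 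Because $\gamma$ is candidate-ordering the $\alpha_v$ are pairwise distinct, so sorting $\textsc{Cand}$ by angle is well-defined; write the sorted list $(v_1,\dots,v_M)$ with $\alpha_1<\dots<\alpha_M$ and $s_c=\gamma(\alpha_c)$. The key consequence is that, writing $f(\beta)$ for the number of previously-unknown cofacets of $\sigma$ whose $\gamma$-normal angle lies in $[0,\beta]$,
\[
  \textsc{Indeg}(\sigma,\gamma(\beta)) - \#\textsc{Known}^{\leq}_{\gamma(\beta)} = f(\beta)\qquad(\beta\in[0,\pi]).
\]
Indeed $\textsc{Indeg}(\sigma,\gamma(\beta)) = |\cldowncofacets{\gamma(\beta)}{\sigma}|$ counts the cofacets of $\sigma$ with $\gamma$-normal angle in $[\beta-\pi,\beta]$ and $\#\textsc{Known}^{\leq}_{\gamma(\beta)}$ counts the known ones among them; the hypothesis that $\textsc{Known}$ contains every vertex of $\downcofacets{s}{\sigma}$ means exactly that every previously-unknown cofacet has $\gamma$-normal angle in $[0,\pi]$ (since $s=\gamma(0)$ and $\downcofacets{s}{\sigma}$ is precisely the set of cofacets with angle in $(\pi,2\pi)$), so splitting $[\beta-\pi,\beta]$ into $[0,\beta]$ and $[\beta+\pi,2\pi)$ and discarding the (entirely known) second piece yields the identity. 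Consequently $\#\textsc{InitiallyUnfound}=\textsc{Indeg}(\sigma,-s)-\#\textsc{Known}^{\leq}_{-s}=f(\pi)$ equals the total number $U\ge0$ of previously-unknown cofacets; $f$ is nondecreasing with $f(l)=U$ (no candidate, hence no cofacet, has angle in $(\alpha_l,\pi]$), using the convention $f(0):=0$; and $f(c)-f(c-1)\in\{0,1\}$, with value $1$ exactly when $v_c$ is a previously-unknown genuine cofacet, since no cofacet has angle strictly between $\alpha_{c-1}$ and $\alpha_c$.

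With this dictionary, correctness for $\dim(\aff(\sigma))<d-1$ follows from a loop invariant: after $k$ iterations of the outer \textbf{while}, $\#\textsc{Found}=k$ and the outputs so far are exactly the $k$ previously-unknown cofacets of smallest $\gamma$-normal angle. For the inductive step, note that the test on \alglnref{findunfound:innerifbegin} is literally ``$f(c)>\#\textsc{Found}$'', so the inner loop is a binary search on $[1,l+1]$ preserving $f(a-1)\le k$ and $f(b-1)>k$ (with $f(l+1):=f(l)=U>k$ while the outer loop runs); it halts with $b=a+1$, so $a$ is the unique index with $f(a-1)\le k<f(a)$, which forces $f(a-1)=k$ and $f(a)=k+1$ since $f$ increases by at most $1$. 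Hence $v_a$ is a previously-unknown genuine cofacet, it is the $(k+1)$-st such in angular order, and by the inductive hypothesis it is new, so outputting it and incrementing $\#\textsc{Found}$ preserves the invariant. Thus the outer loop runs exactly $U$ times and outputs all $U$ previously-unknown cofacets, each once and nothing else. For $\dim(\aff(\sigma))=d-1$, \cref{lem:twosides} gives at most one candidate cofacet on each side of $\aff(\sigma)$, hence at most one strictly above $\sigma$ with respect to $s$; the identity (with $\beta=\pi$) shows $\#\textsc{InitiallyUnfound}\in\{0,1\}$ equals the number of previously-unknown cofacets above $\sigma$, and when it is $1$ that cofacet has $\gamma$-normal $\gamma(0)$, hence is the first element of the angle-sorted $\textsc{Cand}$, which is exactly what \alglnref{findunfound:dminusone} outputs.

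For the running time I would tally the algorithm line by line. For each vertex of $\textsc{Known}$ and of $\textsc{Cand}$, computing its $\gamma$-normal $s_c$ and angle $\alpha_c$ costs $O(d)$ (a constant number of $d$-dimensional inner products and one $\mathrm{atan2}$), hence $O(n_0d)$ overall since a cofacet of $\sigma$ is named by one of at most $n_0$ vertices; the two sorts cost $O(n_0\log n_0)$ and $l$ is located by one more $O(\log n_0)$ search. Each $\#\textsc{Known}^{\leq}_{s_c}$ (and $\#\textsc{Known}^{\leq}_{-s}$) is computed as the sum of two range-counts in the angle-sorted $\textsc{Known}$, so $O(\log n_0)$ each and $O(n_0\log n_0)$ in total; the one indegree query on \alglnref{findunfound:tobefound} costs $I$. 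When $\dim(\aff(\sigma))=d-1$ the remainder is $O(d)$ with no further queries. When $\dim(\aff(\sigma))<d-1$, the outer loop runs $U$ times, each iteration running an $O(\log n_0)$-step binary search with one indegree query ($O(I)$) and $O(1)$ additional work per step (the $s_c$ and $\#\textsc{Known}^{\leq}_{s_c}$ are precomputed) and then one $O(d)$-time output; as $U\le n_0$, the $O(Ud)$ output cost is absorbed into $O(n_0d)$. Summing gives $O(n_0d+I+(n_0+UI)\log n_0)$ time and $1+O(U\log n_0)=O(1+U\log n_0)$ indegree queries.

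I expect the main obstacle to be pinning down the angular bookkeeping of the first paragraph precisely: handling the endpoints of the cyclic half-turn intervals, using the hypothesis on $\textsc{Known}$ to cancel the ``$[\beta+\pi,2\pi)$ part'' so that $\textsc{Indeg}-\#\textsc{Known}^{\leq}$ is exactly $f$, and verifying that the binary-search index $a$ always lands on a genuine previously-unknown cofacet --- which is precisely where the observation ``$f$ jumps by at most $1$, and by $1$ only at genuine new cofacets'' is used. A minor technicality to handle along the way is that a candidate cofacet whose vertex happens to project to $o$ is assigned angle $0$ by \cref{def:angle}; candidate-ordering permits at most one such cofacet, and since the algorithm only evaluates indegrees and uses $\#\textsc{Known}^{\leq}$ at angles in $[0,\pi]$, the identity above still accounts for it correctly.
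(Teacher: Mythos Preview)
Your proposal is correct and follows essentially the same approach as the paper: both prove correctness by showing the inner loop is a binary search that isolates the next to-be-found cofacet in the angular order, using that all unknown cofacets lie in the upper half-turn $[0,\pi]$ so that $\textsc{Indeg}-\#\textsc{Known}^{\leq}$ counts exactly the unknown cofacets up to a given angle. Your packaging via the step function $f$ is a clean reformulation of the paper's interval invariant ``$\textsc{Cand}[a,\dots,b-1]$ contains $\rho$''; the only substantive difference is that you precompute each $\#\textsc{Known}^{\leq}_{s_c}$ by a $O(\log n_0)$ binary search (total $O(n_0\log n_0)$), whereas the paper uses a linear two-pointer radial sweep to do all of them in $O(n_0)$ --- both fit the stated bound.
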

    
    \begin{proof}        
        We start with the correctness.
        First, consider the case that $\dim(\sigma) = d-1$.
        If ${\#\textsc{InitiallyUnfound}=0}$, then no to-be-found cofacets lie above $\sigma$.
        By \cref{lem:twosides}, at most one candidate vertex $v$ may lie above $\sigma$.
        Therefore, if $\#\textsc{InitiallyUnfound}>0$, then $\#\textsc{InitiallyUnfound}=1$, and the only to-be-found cofacet of $\sigma$ is the first candidate, whose vertex we correctly return on \alglnref{findunfound:dminusone}.
        
        Next, consider the case that $\dim(\sigma) < d-1$.
        We claim that during the $j$-th iteration, the loop in \alglnrefRange{findunfound:outerwhilebegin}{findunfound:outerwhileend} outputs the $j$-th element of $\textsc{Cand} \setminus \textsc{Known}$ that is a cofacet of $\sigma$, which we refer to as the $j$-th ``to-be-found'' cofacet of $\sigma$, so that after $j$ iterations, the first $j$ to-be-found cofacets have been output.
        To establish this, we now argue that the loop of \alglnrefRange{findunfound:innerwhilebegin}{findunfound:innerwhileend} performs a binary search for the next to-be-found cofacet of~$\sigma$.
        
        Let $\rho$ be the next to-be-found cofacet of $\sigma$.
        Let $\gamma(\alpha_\rho)$ be the $\gamma$-normal of $\rho$.
        Because $\rho$ is unknown, we have $\alpha_\rho\leq\pi$.
        We maintain that $\textsc{Cand}[a, \dots,  b-1]$ contains $\rho$.
        The invariant holds initially, since $\textsc{Cand}[a, \dots,  b-1]$ contains all candidates $c$ with $\alpha_c\leq\pi$, and thus contains $\rho$.
        We show that the invariant is maintained.
        On \alglnref{findunfound:innerifbegin}, either $\rho\in\textsc{Cand}[a, \dots, c]$, or $\rho\in\textsc{Cand}[c+1, \dots, b-1]$.
        First consider the case that $\rho\in\textsc{Cand}[a, \dots, c]$. 
        Since $0\leq \alpha_\rho \leq \alpha_c \leq \pi$, we know $\rho$ lies in the halfspace below $\sigma$ with respect to $s_c$.
        So this halfspace contains at least one to-be-found cofacet of $\sigma$.
        Moreover, all candidates $\textsc{Cand}[f]$ previously found by the while-loop have $0\leq \alpha_f\leq\alpha_c$, and therefore also lie in this halfspace.
        Thus, $\rho$ is counted by $\textsc{Indeg}(\sigma, s_c)$.
        Since $\rho$ has not been found, nor was it known, we have $\Call{Indeg}{\sigma, s_c} > \textsc{\#Found} + \# \textsc{Known}^{\leq}_{s_c}$ on \alglnref{findunfound:innerifbegin}. 
        Therefore, the invariant is maintained if $\rho\in\textsc{Cand}[a, \dots, c]$.
        
        Now consider the other case that $\rho\in\textsc{Cand}[c+1, \dots, b-1]$.~Suppose, for the sake of contradiction, that $\Call{Indeg}{\sigma, s_c}>\textsc{\#Found}+\# \textsc{Known}^{\leq}_{s_c}$.
        Then there exists a to-be-found cofacet $\textsc{Cand}[u]$ of $\sigma$ with $\alpha_u\in[0,\alpha_c]\cup[\alpha_c+\pi,2\pi)$.
        Because all to-be-found cofacets have $\alpha_u\leq\pi$, we have $\alpha_u\in[0,\alpha_c]$, and so there is a to-be-found cofacet in $\textsc{Cand}[a, \dots, c]$, which contradicts that $\rho$ is the first to-be-found cofacet.
        We conclude that $\Call{Indeg}{\sigma, s_c}\leq\textsc{\#Found}+\# \textsc{Known}^{\leq}_{s_c}$, so the invariant is maintained.
        When the loop terminates, $\textsc{Cand}[a, \dots, b-1]$ contains only one to-be-found candidate, which by the invariant corresponds to $\rho$.
        This concludes the proof of correctness.
        
        Next, we analyze the runtime and number of indegree queries.
        Computing the $\gamma$-normals and angles of \textsc{Known} and \textsc{Cand}, and sorting them takes $O(n_0d+n_0\log n_0)$ time.
        Computing $l$ takes $O(\mathfrak{C})$ time.

        We show that \alglnref{findunfound:precompute} can be implemented to take $O(n_0)$ time, by radially sweeping along $\gamma$ using a parameter~$\alpha$.
        Specifically, consider the halfspace through $\sigma$ with exterior normal $\gamma(\alpha)$.
        Because the \textsc{Known} array is sorted by $\gamma$-normal, we can represent the range of elements of \textsc{Known} (and similarly of \textsc{Cand}) that lie in that halfspace using two pointers.
        As the halfspace rotates as $\alpha$ increases, we maintain these pointers, and when $\gamma(\alpha)=s_c$, we associate $\textsc{\#Known}_{s_c}^\leq$ to $\textsc{Cand}[c]$.
        The events of this radial sweep correspond to values of $\alpha$ for which elements of \textsc{Known} or \textsc{Cand} enter or leave the halfspace, so the number of events is $O(\textsc{\#Known}+\textsc{\#Cand})=O(n_0)$.
        Updating the pointers or associating $\textsc{\#Known}_{s_c}^\leq$ to $\textsc{Cand}[c]$ takes constant time per event.
        Therefore, \alglnref{findunfound:precompute} takes $O(n_0)$ time.

        \alglnref{findunfound:tobefound} takes $O(I+\textsc{\#Known})=O(I+n_0)$ time, using one indegree query and counting the number of elements of \textsc{Known} in the corresponding halfspace.
        \alglnrefRange{findunfound:dminusoneif}{findunfound:foundinit} take constant time.
        The while loop in \alglnrefRange{findunfound:outerwhilebegin}{findunfound:outerwhileend} takes $U$ iterations.
        The running time of each iteration is dominated by the binary search, which takes $O(I \log \textsc{\#Cand})=O(I\log n_0)$ time and $O(\log n_0)$ indegree queries.
        The total running time is therefore $O(n_0d+n_0\log n_0+I+UI\log n_0)$, and we use $O(1+U\log n_0)$ indegree queries.
    \end{proof}

    \begin{remark}[Comparison to UpEdges]
        Rather than using an indegree oracle, the algorithms of~\cite{graphsearch} give a specific method for computing indegree in $O(\log n_0 + \Pi)$ time, where $\Pi$ is the time required to compute a persistence diagram. That is, using the methods of~\cite{graphsearch}, $I = O(\log n_0 + \Pi)$.
        
        Then, \textsc{UpEdges} runs in $O((U \log n_0)(\log n_0 + d + I))$ time. However, it assumes that \textsc{Cand} and \textsc{Known} are already sorted.
        If we also make this assumption in \textsc{FindUnfound} and similarly restrict ourselves to the case $i=0$, our runtime becomes $O(I + (n_0 + U I) \log n_0)$ and is therefore an improvement on the running time of \textsc{UpEdges}.
        In large part, this is because each iteration of the binary search in \textsc{UpEdges} calls a subroutine (\textsc{SplitArc}), which in turn uses an additional binary search to find pointers that represent candidates in a particular wedge around~$\sigma$.
        By preprocessing the radially sorted list of candidates, we bypass the nested binary search.

        We observe that~\cite{graphsearch} relegates the cyclic ordering of candidates to a simultaneous precomputation made elsewhere in their method, using~\cite[Lems 1 and 2]{verma2011slow}. In view of the generality of our higher-dimensional methods, we have elected to retain our (slower) strategies of cyclic ordering, as we could only apply the methods of~\cite{verma2011slow} in the specific case $i=0$.
    \end{remark}

    In \cref{lem:findunfound}, having a known lower halfspace is crucial for the binary search, as it allows us to deduce the number of unfound cofacets with angles between $0$ and any particular query angle.
    In \cref{alg:reconstructnext}, we show that Property~\ref{prop:halfspace} of a sweeping order (\cref{def:feasible_query_order}) can be used to ensure a known lower halfspace.

\subsection{Reconstructing \texorpdfstring{$K$}{K}}
    From \cref{cor:orderingordering}, we can iteratively build up a
    candidate-ordering-compatible sweeping order for $K_i$, given~$K_{i-1}$.
    Processing $K_i$ in this order, \cref{alg:reconstructnext}
    reconstructs~$K_{i+1}$.
    
    \newcommand{\Found}{\mathit{Found}}
    \begin{algorithm}
            \caption{
            \Call{ReconstructNext}{$K_i, \SSeq\crep_i$}, for a candidate-ordering-compatible sweeping order, $\SSeq\crep_i$,
            computes the $(i+1)$-simplices of $K$.}
            \label{alg:reconstructnext}
            \begin{algorithmic}[1]
            \Procedure{ReconstructNext}{$K_i,\SSeq\crep_i$}
                \label{algln:reconstruct:empty}
                \For{$(\sigma,s,\gamma)$ in $\SSeq\crep_i$}
                    \State $\knbd[\sigma] \gets []$\Comment{the list of vertices known to define cofacets of $\sigma$}
                \EndFor
                \For{$(\sigma,s, \gamma)$ in $\SSeq\crep_i$}
                \label{algln:reconstruct:bigloopstart}
                    \For{$u \in $~\Call{FindUnfound}{$\sigma, s, \gamma, \knbd[\sigma]$}}
                    \label{algln:reconstruct:findunfound}
                        \State $\rho \gets \sigma \cup \{u\}$
                        \State \Output $\rho$ \Comment{newly found $(i+1)$-simplex}
                        \label{algln:reconstruct:output}
                        \For{$v \in \rho$}\Comment{record $\rho$ as a known cofacet of all its facets}
                        \label{algln:reconstruct:vertexstart}
                            \State append $v$ to $\knbd[\rho-v]$
                            \label{algln:reconstruct:vertexend}
                        \EndFor
                    \EndFor
                \EndFor
                \label{algln:reconstruct:bigloopend}
            \EndProcedure
            \end{algorithmic}
    \end{algorithm}
    
    \begin{lemma}
    \label{lem:reconstructnext}
        \cref{alg:reconstructnext} is correct, i.e., \Call{ReconstructNext}{$K_i, \SSeq_i$} outputs exactly the $i+1$-simplices of $K$. 
        Furthermore, it runs in $O\left(n_i\left( n_0 d + I + n_0\log n_0\right)+(I\log n_0+i)n_{i+1}\right)$ time, and performs $O(n_i + n_{i-1}\log n_0)$ indegree queries.
    \end{lemma}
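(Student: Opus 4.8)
The plan is to prove correctness by induction along the sweeping order, and then to obtain the two performance bounds from a charging argument that identifies $\sum_{\sigma\in K_i}U_\sigma$ with $n_{i+1}$, where $U_\sigma$ denotes the value of $\#\textsc{InitiallyUnfound}$ in the iteration processing $\sigma$; both bounds then follow by summing the per-call cost of \cref{alg:findunfound} supplied by \cref{lem:findunfound}.

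\textbf{Correctness.} I would first fix the loop invariant that, throughout the main loop of \cref{alg:reconstructnext} (\alglnrefRange{reconstruct:bigloopstart}{reconstruct:bigloopend}) and for every $i$-simplex $\sigma$ of $K$, the list $\knbd[\sigma]$ equals the set of vertices $u$ for which $\sigma\cup\{u\}$ is an $(i+1)$-simplex of $K$ that has already been \textbf{Output}. This invariant is preserved by \alglnrefRange{reconstruct:vertexstart}{reconstruct:vertexend}, since each facet $\rho-v$ of an output simplex $\rho$ is itself an $i$-simplex of $K$ (as $K$ is a complex) and is handed exactly its one missing vertex $v$. Writing $\SSeq\crep_i=((\sigma_j,s_j,\gamma_j))_{j=1}^{n_i}$, I then prove by induction on $j$ that after iteration $j$ of the main loop, every $(i+1)$-simplex of $K$ that is a cofacet of some $\sigma_h$ with $h\le j$ has been output. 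For the inductive step: in iteration $j$, \propref{halfspace} of \cref{def:feasible_query_order} says that every $\tau\in\downcofacets{s_j}{\sigma_j}$ is a cofacet of some $\sigma_h$ with $h<j$, which by the inductive hypothesis has already been output, so by the invariant its defining vertex lies in $\knbd[\sigma_j]$; hence $\knbd[\sigma_j]$ contains every vertex of a cofacet in $\downcofacets{s_j}{\sigma_j}$, i.e.\ the precondition of \cref{alg:findunfound} is met, and by \cref{lem:findunfound}(1) the call on \alglnref{reconstruct:findunfound} returns exactly the vertices of those cofacets of $\sigma_j$ not already in $\knbd[\sigma_j]$, so after the inner loop all cofacets of $\sigma_j$ have been output. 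Taking $j=n_i$: by \propref{once} every $i$-simplex of $K$ occurs in $\SSeq\crep_i$, and every $(i+1)$-simplex of $K$ is a cofacet of at least one of its $i$-dimensional facets, so every $(i+1)$-simplex of $K$ is output; conversely \cref{alg:findunfound} only ever returns genuine cofacets, and the invariant prevents a simplex from being output twice. Hence \Call{ReconstructNext}{$K_i,\SSeq\crep_i$} outputs exactly $K_{i+1}$.

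\textbf{The charging identity.} The crux of the running-time analysis is $\sum_{\sigma\in K_i}U_\sigma=n_{i+1}$. By the correctness argument, \Call{FindUnfound}{$\sigma,s,\gamma,\knbd[\sigma]$} returns exactly the not-yet-output cofacets of $\sigma$, and by \cref{lem:findunfound}---invoking \cref{lem:twosides} in the branch $\dim\sigma=d-1$, where at most one such cofacet exists and is output on \alglnref{findunfound:dminusone}---the number returned is precisely $U_\sigma$. On the other hand each $(i+1)$-simplex $\tau$ of $K$ is output exactly once, at a moment when it is a not-yet-output cofacet of the simplex then being processed; since \cref{alg:findunfound} returns \emph{all} such cofacets, $\tau$ is in fact output while processing the first facet of $\tau$ appearing in $\SSeq\crep_i$. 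Thus the sets of simplices returned by the successive \Call{FindUnfound}{} calls partition $K_{i+1}$, which gives the identity.

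\textbf{Running time and query count.} I would then sum over the $n_i$ iterations of the main loop. Setting up the $\knbd$ lists costs $O(n_i)$. The iteration for $\sigma$ makes a single \Call{FindUnfound}{} call, costing $O(n_0d+I+(n_0+U_\sigma I)\log n_0)$ time and $O(1+U_\sigma\log n_0)$ indegree queries by \cref{lem:findunfound}(2), plus $O(i)$ bookkeeping per output $(i+1)$-simplex on \alglnrefRange{reconstruct:vertexstart}{reconstruct:vertexend}, where the $\knbd$ lists are stored in a dictionary keyed by the vertex sets of $K_i$. Summing, and using $\sum_\sigma U_\sigma=n_{i+1}$ together with the fact that there are $n_{i+1}$ outputs overall, yields total time $O\bigl(n_i(n_0d+I+n_0\log n_0)+(I\log n_0+i)n_{i+1}\bigr)$ and $O(n_i+n_{i+1}\log n_0)$ indegree queries. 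I expect the most delicate part to be the correctness bookkeeping---stating the $\knbd$ invariant sharply enough that the precondition of \cref{alg:findunfound} is verifiable at every iteration using only \propref{halfspace}---and making the charging identity $\sum_\sigma U_\sigma=n_{i+1}$ airtight, in particular the point that no $(i+1)$-simplex can remain unfound past the processing of any of its facets, which is exactly where the ``finds all unfound cofacets'' guarantee of \cref{lem:findunfound} is invoked.
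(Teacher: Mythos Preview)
Your proof is correct and follows essentially the same approach as the paper: an inductive loop invariant that uses \propref{halfspace} of the sweeping order to verify the precondition of \cref{lem:findunfound} at each iteration, followed by the charging identity $\sum_\sigma U_\sigma = n_{i+1}$ to sum the per-call bounds from \cref{lem:findunfound}. Your query bound $O(n_i + n_{i+1}\log n_0)$ agrees with the paper's own proof; the $n_{i-1}$ in the lemma statement is a typo.
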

    
    \begin{proof}
        Every output simplex is clearly an $(i+1)$-simplex of $K$.
        It remains to show that every $(i+1)$-simplex of $K$ is output.
        Let $(\sigma_j, s_j, \gamma_j)\in\SSeq\crep_i$ denote the triple processed in iteration~$j$ of the loop in 
        \alglnrefRange{reconstruct:bigloopstart}{reconstruct:bigloopend}. We claim that this loop satisfies the
        following invariant: after processing $\sigma_j$, the output includes
        $\cof(\sigma_j)$.
        Before entering the loop, this is vacuously true.
        
        Next, suppose that the invariant holds up until iteration $j$.
        That is, for each $h < j$, the output includes $\cofacets{\sigma_h}$.
        Now consider entering iteration~$j$.
        First, we claim that $\downcofacets{s_j}{\sigma_j}$ is already included in the output.
        Since the $i$-simplices are processed according to a sweeping order, 
        the direction $s_j$ is perpendicular to~$\sigma_j$ by \propref{perpendicular} of \cref{def:feasible_query_order},
        so $\downcofacets{s_j}{\sigma_j}$ is well-defined. 
        Furthermore, by \propref{halfspace} of \cref{def:feasible_query_order}, 
        all cofacets of $\sigma_j$ contained in $\downcofacets{s_j}{\sigma_j}$ 
        are cofacets of $i$-simplices that appeared earlier in the sweeping order. 
        By assumption, such $(i+1)$-simplices have already been output. 

        On \alglnrefRange{reconstruct:vertexstart}{reconstruct:vertexend}, we
        record all output $(i+1)$-simplices as cofacets for each of their
        respective facets in the set $\knbd$, specifically recording the vertex
        defining the cofacet-facet relation. In particular, each cofacet
        of~$\downcofacets{s_j}{\sigma_j}$ was output, and its defining vertex was added to $\knbd(\sigma_j)$ in the same iteration.
        Therefore, in iteration $j$, the vertices defining the set $\downcofacets{s_j}{\sigma_j}$ are included in $\knbd(\sigma_j)$.
        
        Then by \cref{lem:findunfound}, calling {\Call{FindUnfound}{$\sigma_j,
        s_j, \gamma_j, \knbd(\sigma)$}} on \alglnref{reconstruct:findunfound}, we 
        correctly return the list of vertices defining
        any cofacets of $\sigma_j$ that were previously unknown.
        Iterating over all
        vertices defining cofacets in this set and adding such cofacets to our
        output on \alglnref{reconstruct:output} means that, entering
        iteration~$j+1$, all cofaces of previously processed $i$-simplices have
        been output, maintaining the invariant.
        
        By \propref{once} of \cref{def:feasible_query_order}, 
        all $i$-simplices appear exactly once in $\SSeq\crep_i$, 
        so the loop iterates over all $i$-simplices in $K$. 
        Since each $(i+1)$-simplex is necessarily a cofacet of some~$i$-simplex, 
        when the loop terminates, all $(i+1)$-simplices have been found.

        Next, we analyze the running time.
        Let $U_\sigma$ denote the number of cofacets output in the iteration of \alglnref{reconstruct:findunfound} that processes $\sigma$.
        Note that because each simplex is output exactly once, we have $\sum_\sigma U_\sigma=n_{i+1}$.
        By \cref{lem:findunfound}, the call to \cref{alg:findunfound} in the iteration that processes $\sigma$ takes $O(n_0 d+I+(n_0 +U_\sigma I)\log n_0)$ time.
        Moreover, for each $(i+1)$-simplex $\rho$ that is output during this iteration, we iterate over the $i+2$ vertices $v$ of $\rho$ to insert $v$ into $\knbd[\rho-v]$.
        The sweeping order has $n_i$ elements, making the total running time
        \begin{align*}
             &\, O\left(\sum_{(\sigma,s,\gamma)\in\SSeq\crep_i}  \left( n_0 d + I + (n_0+U_\sigma I)\log n_0 + U_\sigma (i+2) \right)\right)\\
            =&\, O\left(\sum_{(\sigma,s,\gamma)\in\SSeq\crep_i}\left( n_0 d + I + n_0\log n_0+U_\sigma (I\log n_0+i) \right)\right)\\
            =&\, O\left(n_i\left( n_0 d + I + n_0\log n_0\right)+n_{i+1}(I\log n_0+i)\right).
        \end{align*}
        The algorithm uses $O(n_i+n_{i+1}\log n_0)$ indegree queries.
    \end{proof}
    
    Finally, we present \cref{alg:reconstructall}, which reconstructs $K$ given
    the vertex set $K_0$.
    \begin{algorithm}[H]
        \caption{
        \Call{ReconstructAll}{$K_0$} computes $K$, satisfying \cref{ass:reconstruction} for all $i<\dim(K)$.}
        \label{alg:reconstructall}
        \begin{algorithmic}[1]
            \Procedure{ReconstructAll}{$K_0$}
            \State $\SSeq\crep_0 \gets$ \Call{Order$^\circ$}{$K_0$}, computed to be candidate-ordering-compatible
            \label{algln:reconstructall:sweepinit}
            \State $i \gets 0$
            \label{algln:reconstructall:iequalszero}
            \While{$K_i$ has $i$-simplices} 
            \label{algln:reconstructall:whilebegin}
            \State $i \gets i+1$
            \label{algln:reconstructall:increase}
            \State $K_i \gets K_{i-1}~\cup$ \Call{ReconstructNext}{$K_{i-1}, \SSeq\crep_{i-1}$}
            \label{algln:reconstructall:reconstructnext}
            \State $\SSeq\crep_i \gets$ \Call{Order$^\circ$}{$K_i, \SSeq\crep_{i-1}$}, where on \alglnref{sweep:gamma}, $\gamma_\rho$ is chosen to be candidate-ordering%
            \label{algln:reconstructall:sweep}
            \EndWhile
            \label{algln:reconstructall:whileend}
            \State\Return $K_i$
        \EndProcedure
        \end{algorithmic}
    \end{algorithm}
    
    \begin{restatable}{theorem}{reconstructall}
        \label{thm:reconstructall}
        Let $K$ be a simplicial complex of dimension $\dim(K)=\kappa$ in $\R^d$ satisfying \cref{ass:reconstruction} for all~$i < \kappa$. 
        If we know $K_0$, then \cref{alg:reconstructall} reconstructs $K$, i.e., \Call{ReconstructAll}{$K_0$} $ = K$, 
        in $O\left(|K|\left( n_0 d + n_0\log n_0 + I\log n_0 + \kappa \right)+ |K_{\kappa-1}|d \kappa\min\{d,\kappa\}\right)$ time, 
        using $O(\vert K\vert \log n_0)$ indegree queries.
    \end{restatable}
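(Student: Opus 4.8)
The plan is to establish correctness by induction on the skeleton dimension, and then sum the running times and query counts of the subroutines invoked. For correctness, first I would note that \cref{alg:reconstructall} initializes $i=0$, computes a candidate-ordering-compatible sweeping order $\SSeq\crep_0$ for $K_0$ via \textsc{Order$^\circ$} (possible by \cref{cor:orderingordering} applied to the base case, or directly since vertices impose no candidate-ordering obstruction), and then enters the while loop. The loop invariant I would maintain is: at the start of the iteration that increments $i$ to value $i$, the variable $K_{i-1}$ equals the true $(i-1)$-skeleton of $K$ and $\SSeq\crep_{i-1}$ is a candidate-ordering-compatible sweeping order for it. The inductive step combines two facts already in the excerpt: \cref{lem:reconstructnext} guarantees that \Call{ReconstructNext}{$K_{i-1},\SSeq\crep_{i-1}$} outputs exactly the $i$-simplices of $K$ (its hypotheses are met because $\SSeq\crep_{i-1}$ is a candidate-ordering-compatible sweeping order and $K$ satisfies \cref{ass:reconstruction} for all $i<\kappa$), so $K_i$ is correctly set on \alglnref{reconstructall:reconstructnext}; and \cref{cor:orderingordering} together with \cref{thm:sweep} guarantees that \textsc{Order$^\circ$}$(K_i,\SSeq\crep_{i-1})$ on \alglnref{reconstructall:sweep} produces a candidate-ordering-compatible sweeping order for $K_i$, re-establishing the invariant. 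The loop terminates once $K_i$ has no $i$-simplices, i.e., when $i=\kappa+1$ (or sooner if some skeleton is empty), at which point $K_i=K_{\kappa}=K$, so the returned value is $K$.

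For the running time, I would account separately for the sweeping-order computations and the reconstruction steps, then sum over $i$ from $0$ to $\kappa$. Each call to \textsc{Order$^\circ$}$(K_i,\SSeq\crep_{i-1})$ costs $O(n_{i-1} d\, i\min\{d,i\} + n_i(i+d+\log n_0))$ by \cref{thm:sweep}; the dominant $d\,i\min\{d,i\}$ factor only multiplies $n_{i-1}$, and the largest value of $i$ relevant here is $\kappa$ (the sweeping order for $K_{\kappa-1}$, which is the last one whose simplices get swept), contributing the isolated term $O(|K_{\kappa-1}|\,d\,\kappa\min\{d,\kappa\})$; the remaining $\sum_i n_i(i+d+\log n_0)$ is absorbed into $O(|K|(n_0 d + n_0\log n_0 + \kappa))$ since $n_i\le n_0\cdot\binom{\text{stuff}}{}$ crudely bounds each term and $\sum_i n_i = |K|$. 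Each call to \Call{ReconstructNext}{$K_{i-1},\SSeq\crep_{i-1}$} costs $O(n_{i-1}(n_0 d + I + n_0\log n_0) + (I\log n_0 + i)n_i)$ by \cref{lem:reconstructnext}; summing over $i$ and using $\sum_i n_i = |K|$ gives $O(|K|(n_0 d + I + n_0\log n_0 + I\log n_0 + \kappa))$, and since $I\log n_0$ dominates $I$ this is $O(|K|(n_0 d + n_0\log n_0 + I\log n_0 + \kappa))$. Adding the two contributions yields the claimed bound $O(|K|(n_0 d + n_0\log n_0 + I\log n_0 + \kappa) + |K_{\kappa-1}|\,d\,\kappa\min\{d,\kappa\})$. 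For the query count, \cref{lem:reconstructnext} says the $i$-th reconstruction uses $O(n_{i-1} + n_i\log n_0)$ indegree queries; summing over all $i$ gives $O(|K| + |K|\log n_0) = O(|K|\log n_0)$, and the sweeping-order computations use no indegree queries, so the total is $O(|K|\log n_0)$ as stated.

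The main obstacle I anticipate is the bookkeeping in the running-time summation, specifically making sure the $d\,i\min\{d,i\}$ term from the sweeping-order cost does not contaminate the bound beyond the single isolated summand $|K_{\kappa-1}|d\kappa\min\{d,\kappa\}$ — one must observe that this factor is attached to $n_{i-1}$ rather than $n_i$, that $\sum_{i\le\kappa} n_{i-1}\,d\,i\min\{d,i\}$ is dominated (up to constants) by its largest term because $n_{i-1}$ can concentrate in the top skeleton, and that in the regime where lower skeletons are large the term is still absorbed by $O(|K|\cdot\text{poly})$ only if one is careful — here I would either invoke a monotonicity/worst-case argument or simply state the bound as written in the theorem, which isolates exactly the $|K_{\kappa-1}|$ contribution, matching how the recursion actually telescopes (only the order on $(i-1)$-simplices with $i-1 = \kappa-1$ is ever both expensive and applied to a large skeleton). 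A minor secondary point is handling the possibility that some intermediate skeleton $K_i$ is empty while higher ones are not — but \cref{def:complex} (downward closure) rules this out, so the loop's termination condition is clean. Everything else is a direct appeal to \cref{thm:sweep}, \cref{lem:reconstructnext}, \cref{cor:orderingordering}, and \cref{lem:findunfound}.
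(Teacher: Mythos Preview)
Your proposal is correct and follows essentially the same structure as the paper's proof: a loop invariant for correctness (appealing to \cref{lem:reconstructnext}, \cref{thm:sweep}, and \cref{cor:orderingordering}), followed by summing the subroutine costs over $i$ for the running time and query count.

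One clarification on your anticipated obstacle: the isolated term $|K_{\kappa-1}|\,d\,\kappa\min\{d,\kappa\}$ does \emph{not} arise from a single dominant iteration as you suggest, and no monotonicity or concentration argument is needed. The paper simply bounds $i\min\{d,i\}\le \kappa\min\{d,\kappa\}$ uniformly for every $i\le\kappa$, pulls that constant out of the sum $\sum_{i=0}^{\kappa-1} n_i\,d\,i\min\{d,i\}$, and uses $\sum_{i=0}^{\kappa-1} n_i = |K_{\kappa-1}|$. So the term is a crude uniform bound over all sweeping-order computations, not a contribution from the top one alone; this makes the bookkeeping trivial rather than delicate.
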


\begin{proof}
    We claim the loop in
    \alglnrefRange{reconstructall:whilebegin}{reconstructall:whileend} satisfies
    the following invariant: entering iteration~$i$, we know $K_{i-1}$ as
    well as a sweeping order for $K_{i-1}$.
    Since the vertex set is taken as input, we know $K_0$ initially. In
    \alglnref{reconstructall:sweepinit}, we call \Call{Order}{$K_0$},
    which, by \cref{thm:sweep}, computes a sweeping order for $K_0$. Thus, the
    loop invariant holds before entering the loop. 
    Now suppose this invariant is true when entering iteration $j$ for some
    $j\geq 1$. That is, we know $K_{j-1}$ and a sweeping order for $K_{j-1}$,
    which we use as inputs in the call to {\sc{ReconstructNext}}
    on~\alglnref{reconstructall:reconstructnext}. By \cref{lem:reconstructnext},
    this outputs all $j$-simplices, which, combined with $K_{j-1}$, gives us $K_j$. Finally on~\alglnref{reconstructall:sweep}, we use
    {\sc{Order}} to compute a sweeping order for $K_{j}$, which,
    again, is correct by \cref{thm:sweep}. Thus, the loop invariant holds.
    Since $K$ is a finite simplicial complex, index $i$ will eventually reach
    $\dim(K)$. At this point, the $i$-skeleton $K_i$ equals the full simplicial
    complex $K$. When we set $i = \dim(K) +1$ in
    \alglnref{reconstructall:increase}, there are no $i$-simplices to find or
    order, so the output
    on~\alglnrefTwo{reconstructall:reconstructnext}{reconstructall:sweep} is
    empty. Thus, we exit the loop and correctly return $K_i = K$.

    Next, we analyze the running time of \cref{alg:reconstructall}.
    \alglnref{reconstructall:sweepinit} takes $O(n_0 \log n_0)$ time by \cref{thm:sweep}.
    The \alglnrefRange{reconstructall:iequalszero}{reconstructall:increase} each take constant time.
    The loop of \alglnrefRange{reconstructall:whilebegin}{reconstructall:whileend} is iterated $\dim(K)$ times.
    During iteration $i$, \alglnref{reconstructall:reconstructnext} takes $O(n_i\left( n_0 d + I + n_0\log n_0\right)+n_{i+1}(I\log n_0+i))$ time by \cref{lem:reconstructnext}, and \alglnref{reconstructall:sweep} takes ${O(n_{i-1} d i \min\{d,i\} + n_i(i+d+\log n_0))}$ time by \cref{thm:sweep}. Thus, the loop takes a total time of
    \begin{align*}
         &\,O\left(
            \sum_{i=1}^{\kappa} \left(n_i\left( n_0 d + I + n_0\log n_0\right)+n_{i+1}(I\log n_0+i)+ n_{i-1} d i \min\{d,i\} + n_i(i+d+\log n_0)\right)
        \right)\\
        =&\, O\left(
            \sum_{i=1}^{\kappa} n_i\left( n_0 d + I + n_0\log n_0 + I\log n_0+i + i+d+\log n_0\right)+ \sum_{i=0}^{\kappa-1} n_i d i \min\{d,i\}
        \right)\\
        =&\, O\left(
            \sum_{i=1}^{\kappa} n_i\left( n_0 d + n_0\log n_0 + I\log n_0 + \kappa \right)+ \sum_{i=0}^{\kappa-1} n_i d \kappa \min\{d,\kappa\}
        \right)\\
        =&\, O\left(
            |K|\left( n_0 d + n_0\log n_0 + I\log n_0 + \kappa \right)+ |K_{\kappa-1}|d \kappa\min\{d,\kappa\}
        \right).
    \end{align*}

    Since this loop dominates the running time, this is also the total running time of \cref{alg:reconstructall}.
    The number of indegree queries is $O(|K|\log n_0)$.
\end{proof}
    
    In the special case where the to-be-reconstructed complex $K$ is known to be embedded, there are no $(d+1)$-dimensional simplices, and we can terminate the loop of \alglnref{reconstructall:whilebegin} after constructing $K_d$ (and before constructing $\SSeq_d$).
    Whenever the algorithm computes $\SSeq_i$, we have $i<d$, so \cref{ass:perp} is automatically satisfied.
    By \cref{obs:embedded_candidates}, \cref{ass:reconstruction} is also satisfied, and \cref{thm:reconstruct_embedded} follows.
    \begin{theorem}\label{thm:reconstruct_embedded}
        Under the promise that $K$ is an embedded simplicial complex,
        there is an algorithm that can reconstruct $K$ using $O(|K|\log n_0)$ indegree queries in $O\left(|K|\left( n_0 d + n_0\log n_0 + I\log n_0 \right)+ |K_{\kappa-1}|d^3 \right)$ time.
    \end{theorem}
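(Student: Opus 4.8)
The plan is to read off \cref{thm:reconstruct_embedded} from the machinery already developed, with one small but necessary change to \cref{alg:reconstructall}. Since $K$ is embedded it has no degenerate simplices (by \cref{ass:reconstruction}, which holds here via \cref{obs:embedded_candidates}, together with the remark that a candidate cofacet $\tau$ satisfies $\dim(\tau)=\dim(\aff(\tau))$), so $\dim(\aff(\sigma))=\dim(\sigma)$ for every $\sigma\in K$; in particular $K$ has no $(d+1)$-simplex and $\kappa:=\dim(K)\le d$. The algorithm I would use is \cref{alg:reconstructall} with the loop of \alglnref{reconstructall:whilebegin} halted as soon as $K_d$ has been built --- that is, the call to \textsc{Order$^\circ$} on \alglnref{reconstructall:sweep} is skipped in the (at most one) iteration with $i=d$. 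Because $\kappa\le d$ we have $K_d=K$, so this early exit loses nothing.

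For correctness I would not cite \cref{thm:reconstructall} as a black box --- it tacitly relies on the standing \cref{ass:perp}, which fails for $K$ when $\kappa=d$ (its $d$-simplices have $d$-dimensional affine hulls). Instead I would rerun the loop invariant from the proof of \cref{thm:reconstructall}: entering iteration $i$, we know $K_{i-1}$ together with a candidate-ordering-compatible sweeping order for it. That argument needs only two structural facts along the way, both of which hold in the modified run: \cref{ass:reconstruction} for each $i<\kappa$, supplied by \cref{obs:embedded_candidates}; and \cref{ass:perp} whenever \textsc{Order$^\circ$} is invoked --- but the modified algorithm only ever calls \textsc{Order$^\circ$} on some $K_i$ with $i\le d-1$, and a simplex of dimension at most $d-1$ has affine hull of dimension at most $d-1$. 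With these in place, \cref{lem:reconstructnext} reconstructs $K_{i+1}$ from $K_i$ at each step and \cref{thm:sweep} (or rather its circle-reporting variant, via \cref{cor:orderingordering}) produces the next sweeping order, so the modified loop terminates with $K_\kappa=K$.

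For the running time, the analysis in the proof of \cref{thm:reconstructall} carries over unchanged (with one fewer sweeping-order computation, which only decreases the total), yielding the bound
\[
O\!\left(|K|\,(n_0 d + n_0\log n_0 + I\log n_0 + \kappa) + |K_{\kappa-1}|\, d\,\kappa\min\{d,\kappa\}\right).
\]
Substituting $\kappa\le d$ turns $|K_{\kappa-1}|\,d\,\kappa\min\{d,\kappa\}$ into $|K_{\kappa-1}|\,d^3$, and since $\kappa\le d\le n_0 d$ the additive $+\kappa$ is absorbed into $n_0 d$, giving the claimed $O\!\left(|K|\,(n_0 d + n_0\log n_0 + I\log n_0) + |K_{\kappa-1}|\,d^3\right)$; the indegree-query count $O(|K|\log n_0)$ transfers verbatim.

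The main (and essentially the only) subtlety is the boundary case $\kappa=d$: here one genuinely must build in the $i=d$ early termination, because a sweeping order for the $d$-simplices would pair each with a perpendicular direction while its affine hull is $d$-dimensional, violating \cref{ass:perp} --- and then check, as above, that the skipped computation was never needed since $K_d=K$ already.
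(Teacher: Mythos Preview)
Your proposal is correct and follows essentially the same route as the paper: modify \cref{alg:reconstructall} to stop after building $K_d$ (skipping the final \textsc{Order$^\circ$} call), observe that embeddedness gives both \cref{ass:reconstruction} (via \cref{obs:embedded_candidates}) and \cref{ass:perp} (since \textsc{Order$^\circ$} is then only invoked for $i<d$), and substitute $\kappa\le d$ into the bound from \cref{thm:reconstructall}. You are somewhat more explicit than the paper about why one cannot invoke \cref{thm:reconstructall} as a black box in the boundary case $\kappa=d$, but the argument is the same.
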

    
\section{Discussion}
    Our method for finding all cofacets above a given $i$-simplex, \textsc{FindUnfound} (\cref{alg:findunfound}) generalizes
    the corresponding algorithm
    \textsc{UpEdges} of~\cite[Alg 2]{graphsearch}. If we also assume candidate
    and known cofacets are presorted, as \textsc{UpEdges} does, we improve the running time by a log factor. Thus, for the specific case~${i=0}$, plugging in \textsc{FindUnfound} in the place of \textsc{UpEdges} improves the overall running time of the edge reconstruction method of~\cite{graphsearch}.
    
    Note that~\cite{fasy2024faithful} also gives a higher-dimensional simplex reconstruction method using indegree queries, but does not use any sort of radial search, and is therefore of quite a different nature than the method presented here. Furthermore,~\cite{fasy2024faithful} uses reconstruction only as a proof method for their main result, which, roughly, shows that a particular set of indegree queries is sufficient for reconstruction. They therefore begin with complete knowledge of $K$, allowing them to make efficient choices for query directions in a way that our framework does not allow. This is discussed in more detail in~\cref{append:faithful}. 
    
    We defined sweeping orders with the particular motivation of simplicial complex reconstruction, but we note that \cref{alg:sweep} could be adapted to order objects with a more general cell-structure. 
    In particular, we expect that \cref{alg:sweep} can be adapted to order faces of hyperplane arrangements.
    With such an order, we are curious what other traditional sweepline algorithms can be adapted to higher-dimensional generalizations.

\bibliography{references.bib}
    
\newpage
\appendix

    \section{Example of \Cref{alg:sweep}}
    \label{append:sweeping-order}
    In this appendix, we walk through an example of \cref{alg:sweep} for a particular simplicial complex, as illustrated in~\cref{fig:sweeping-order}.
    \begin{figure}
        \centering
        \includegraphics[page=2]{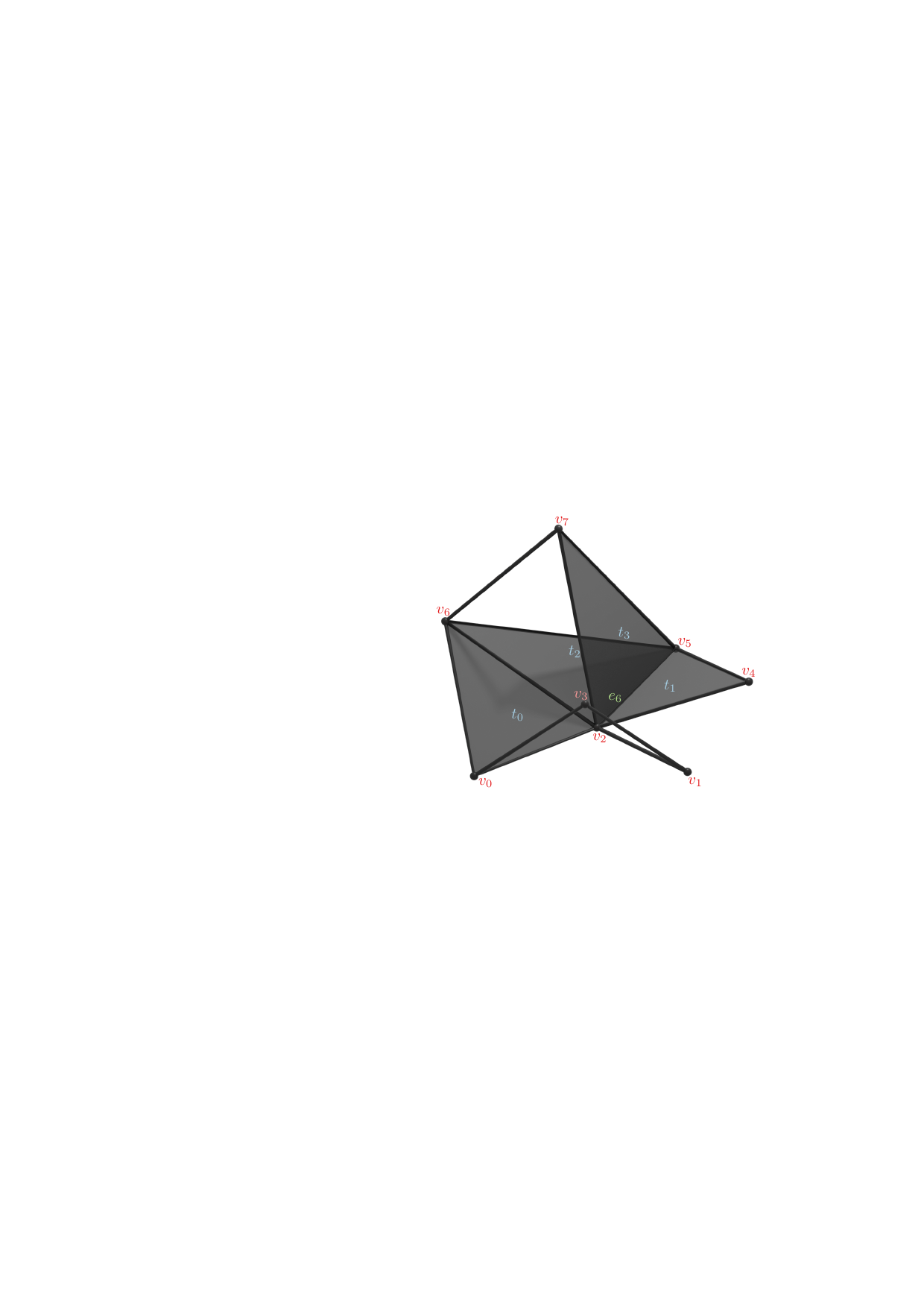}\hspace{1.6em}%
        \includegraphics[page=1]{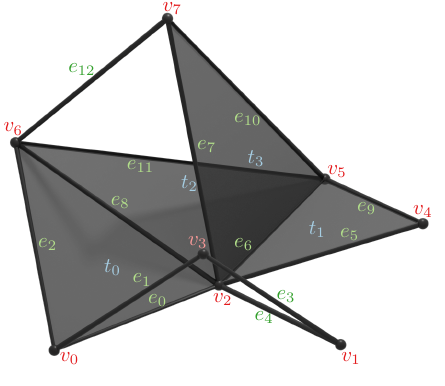}
        \caption{
            A simulation of \Call{Order}{$K_0$} with direction $s$, and \Call{Order}{$K_i,\SSeq_{i-1}$} for $i\in\{1,2\}$. 
            Some circles~$\gamma_\rho$ relevant to the order are shown. 
            Each $\gamma_{v_i}$ lies in a plane parallel to the page. 
            The circle $\gamma_{e_6}$ lies in a plane perpendicular to edge $e_6$.
            A dotted line connecting a simplex $\sigma$ to a facet $\rho$ indicates that $\rho$ is the facet that outputs $\sigma$.
            A solid line indicates the $\gamma_\rho$-normal with which some $\sigma$ is output.
            Indices correspond to the order in simplices are output.
        }
        \label{fig:sweeping-order}
    \end{figure}
    First, we compute \Call{Order}{$K_0$} with direction $s$ (shown as in the
    center of \cref{fig:sweeping-order}, pointing to the top of the page). This
    orders the seven vertices of $K_0$ by their height with respect to $s$. Explicitly, we compute the sweeping order $\SSeq_{i-1} = ((v_i, s))_{i=0}^7$.

    Next, we aim to compute \Call{Order}{$K_1, \SSeq_{0}$}. We begin the for loop on \alglnref{sweep:forbegin} with index $(v_0, s)$, and on \alglnref{sweep:gamma}, we choose $\gamma_{v_0}$ (shown in the left of the figure), which is trivially maximally perpendicular to $v_0$. Rotating around $\gamma_{v_0}$, we encounter edges in the order $e_0$, $e_1$, and $e_2$. Since none of these edges have been previously output, they are the first three entries in our sweeping order, each paired with the corresponding direction from $\gamma_{v_0}$ normal to that edge, which we will call $s_{e_0}, s_{e_1}$, and $s_{e_2}$, respectively.
    We then process $v_1$, choosing $\gamma_{v_1}$ as a maximally perpendicular circle, encountering and outputting $(e_3, s_{e_3})$ and $(e_4, s_{e_4})$ in that order.
    In the loop with index $(v_2, s)$, we encounter edges in the order $e_5, e_6, e_7, e_8$, and $e_0$. However, since $e_0$ has already appeared in the sweeping order, we simply output $e_5, e_6, e_7$, and $e_8$, along with their corresponding directions.
    The remaining iterations of the for loop are similar, and we obtain a sweeping order for $K_1$, denoted $\SSeq_1$.

    Finally, we compute \Call{Order}{$K_2, \SSeq_1$}. We begin the for loop on
    \alglnref{sweep:forbegin} with index $(e_0, s_{e_0})$. Only $t_0$ is
    incident to $e_0$, and corresponds to the single output for this
    iteration (the maximally perpendicular circle around $e_0$ is omitted to
    avoid visual clutter). Although we also encounter $t_0$ when rotating around
    $e_2$, we do not output it again. During iteration $(e_5, s_{e_5})$, we find
    $t_1$, and during iteration $(e_6, s_{e_6})$ we encounter $t_2$ and $t_3$.
  
  If $K$ is in $\R^3$, then each edge has a unique circle of directions
  (maximally) perpendicular to it; for $\R^d$ with $d > 3$, we may choose one.
  Regardless of the ambient dimension, one such circle is illustrated in the
  figure as $\gamma_{e_6}$.

    We emphasize that computing a sweeping order for $K_i$ only requires
    knowledge of the $i$-skeleton and a sweeping order for $K_{i-1}$. In
    particular, when computing sweeping orders for reconstruction (as we do in
    \cref{sec:reconstruct}), we interweave computing sweeping orders with
    reconstructing $i$-skeletons, and do not require any initial knowledge of
    $K$, other than its vertex set.

    \section{Connections to Directional Transforms}
    \label{append:discretization}
        The persistent homology transform (PHT) and Euler characteristic transform
        (ECT), first defined in~\cite{turner2014persistent}, map a geometric
        simplicial complex $K$ to the set of persistence diagrams or Euler
        characteristic functions, respectively, corresponding to lower-star/sublevel
        set filtrations of $K$ in each direction. In~\cite{turner2014persistent,
        ghrist2018euler}, we see that such sets (parameterized by the sphere of
        directions) uniquely correspond to $K$; that is, they are \emph{faithful}.
        In~\cite{ji2024injectivity}, we see conditions for the faithfulness of the
        \emph{quadratic ECT}, which replaces the hyperplane sweeps used in sublevel
        set filtrations with quadratic hypersurfaces. Directional transforms can be
        useful in shape classification applications, see,
        e.g.,~\cite{crawford2019predicting, turner2014persistent, wect2020,
        betthauser2018topological, hofer2017constructing, maria2020intrinsic,
        amezquita2022measuring}. 
        Such applications necessarily use a
        \emph{discretization} of the directional transform, i.e., a finite set of
        descriptors corresponding to a finite sample of the parameterizing
        directions.  
        Finding finite sets of directional topological descriptors from which a
        simplicial complex can still be reconstructed has been explored in several
        contexts~\cite{belton2018learning,belton2019reconstructing,betthauser2018topological,
        curry2022many,micka2020searching, graphsearch, fasy2024faithful}. In
        particular, this is the perspective taken by~\cite{graphsearch}; since
        indegree is computable using information contained in verbose persistence
        diagrams, the sweep algorithm for edge reconstruction produces a set of
        directions for which the corresponding discretization of the verbose PHT is
        faithful.

    Verbose topological descriptors are, roughly speaking, a record of how
    topological invariants change during a filtration, where we note the effects of
    adding a single simplex at a time, even if multiple simplices are added at
    a single parameter value. This varies from the more traditional
    \emph{concise} topological descriptors, which only consider the coarse
    information of how the topological invariants change for each subcomplex of the
    filtration. 
    Note that for an \emph{index filtration} (where each simplex appears
    at a distinct parameter value), verbose and concise topological descriptors
    are equivalent.
    Also note that verbose and concise are sometimes called augmented and non-augmented, respectively.

    If we choose to record homology, Betti number, or Euler characteristic, the resulting topological descriptor is the (verbose/concise) persistence diagram, Betti function, or Euler characteristic function, respectively.
    While precise definitions of these functions are not necessary for our discussion, we refer the reader to~\cite[Section 3 and Appendix A]{ordering} for a more careful treatment. 
    
    \subsection{Indegree from Verbose Persistence Diagrams and Betti Functions}
    \label{append:indegree}
     We note that the
        definition of indegree given in~\cite[Def. 3]{graphsearch} is restricted to
        vertices and edges (a special case of our definition), and the definition of
        $k$-indegree given in~\cite[Def. 21]{fasy2024faithful} allows us to choose the
        dimension of cofaces to count (a generalization of our definition).
    
    In the case of verbose persistence diagrams and (with trivial adaptations)
    verbose Betti functions, we can immediately determine the number of
    $(i+1)$-simplices that appear at the height of $\sigma$. From a single diagram
    corresponding to the direction in question, we simply count the number of
    $i$-dimensional points whose birth height is the same as the height of~$\sigma$
    plus the number of $(i+1)$-dimensional deaths whose birth height is the height
    of $\sigma$; this total is the number of $(i+1)$-simplices that appear at the
    height of $\sigma$ (see, e.g.,~\cite[Lemma 11]{fasy2024faithful}). However, this
    count may be more than the actual indegree of $\sigma$, since $(i+1)$-simplices
    that are not cofacets of $\sigma$ may have their highest vertex at the height of
    $\sigma$.
    
     In~\cite[Alg. 1]{fasy2024faithful}, such extra simplices that are ``attached''
     to a proper face of $\sigma$ are eliminated from the count by recursively
     finding the number of $(i+1)$-simplices incident to faces of $\sigma$,
     beginning with $\sigma$ itself, then with $(i-1)$-dimensional faces of
     $\sigma$, then with $(i-2)$-dimensional faces, and so on. Through an
     inclusion-exclusion computation, these totals are added and subtracted to
     obtain the true indegree of $\sigma$. This operation is limited to computing
     indegree in directions $s$ that isolate $\sigma$ in a hyperplane normal to $s$,
     and also requires various tilts to isolate subsets of vertices. Thus, for this
     implementation of indegree, we require the following general position
     assumption.
    
    \begin{assumption}[General Position for Indegree from Verbose Descriptors]
        Let $K$ be a simplicial complex in $\R^d$. To compute indegree using verbose
        persistence diagrams (or verbose Betti functions) using the methods
        of~\cite{fasy2024faithful}, we require every collection of $i$ vertices for
        $1 \leq i < d$ to be affinely independent. \label{ass:indegree}
    \end{assumption}
    
    The specifics of actually computing indegree using verbose persistence diagrams
    or verbose Betti functions mean that, in practice, computing indegree in a
    single direction requires verbose descriptors for filtrations from multiple
    directions. 
    
    \begin{lemma}[Complexity of Computing Indegree]
    Let $\sigma \in K$ be an $i$-simplex, and suppose that $s \in \sph^{d-1}$ is
    perpendicular to~$\sigma$, and no other vertices of $K$ have the same
    $s$-coordinate as~$\sigma$. Then the indegree of~$\sigma$ with respect to $s$
    can be computed using $2^{i+1}-1$ verbose persistence diagrams or verbose Betti functions.
    \end{lemma}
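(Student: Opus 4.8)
The plan is to prove the lemma by induction on $i$, following the inclusion-exclusion structure of~\cite[Alg.~1]{fasy2024faithful}. The base case $i=0$: here $\sigma$ is a vertex, and by \cref{ass:indegree} (or simply because $s$ isolates $\sigma$), the indegree equals the number of $1$-dimensional deaths with birth height equal to $s\cdot\sigma$ plus the number of $0$-dimensional points born at height $s\cdot\sigma$, which can be read directly from a single verbose persistence diagram in direction $s$; since $2^{0+1}-1 = 1$, the claim holds.

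For the inductive step, suppose the claim holds for all simplices of dimension less than $i$. To compute $\mathrm{Indeg}(\sigma,s)$, first read off from the verbose persistence diagram in direction $s$ the total count $N$ of $(i+1)$-simplices whose highest vertex lies at height $s\cdot\sigma$ (the count of $i$-dimensional births at that height plus $(i+1)$-dimensional deaths born at that height, as in~\cite[Lemma~11]{fasy2024faithful}). This $N$ overcounts $\mathrm{Indeg}(\sigma,s)$ by exactly those $(i+1)$-simplices that are cofacets of some proper face of $\sigma$ but not of $\sigma$ itself, together with those whose top vertex merely happens to lie at that height while being attached to a proper face. The inclusion-exclusion correction of~\cite{fasy2024faithful} subtracts, for each facet $\tau$ of $\sigma$ (there are $i+1$ of them), the analogous indegree-type counts for $\tau$, then adds back the counts for $(i-1)$-faces, and so on down to the vertices of $\sigma$; each such count is (after a suitable tilt of $s$ that still isolates the relevant face, justified by \cref{ass:indegree}) computable by the inductive hypothesis.

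The bookkeeping is then a straightforward sum: a face of $\sigma$ of dimension $i-k$ corresponds (by the inductive hypothesis applied to a simplex of dimension $i-k$) to $2^{i-k+1}-1$ verbose diagrams, and there are $\binom{i+1}{k+1}$ such faces entering the inclusion-exclusion at level $k$, for $k=0,1,\dots,i$. Summing the number of diagrams over all faces of $\sigma$ (including $\sigma$ itself, the $k=0$ term, which contributes the single diagram for the top-level count $N$) gives
\begin{equation*}
\sum_{k=0}^{i}\binom{i+1}{k+1}\bigl(2^{\,i-k+1}-1\bigr) = \sum_{\emptyset\neq F\subseteq\sigma}\bigl(2^{|F|}-1\bigr),
\end{equation*}
and a short computation with the binomial theorem collapses this to $2^{i+1}-1$. (One checks $\sum_{j=1}^{i+1}\binom{i+1}{j}2^{j} = 3^{i+1}-1$ and $\sum_{j=1}^{i+1}\binom{i+1}{j} = 2^{i+1}-1$, but the diagrams are shared across faces with the same vertex set, so the correct accounting is that the diagrams needed for $\sigma$ are exactly those indexed by the nonempty subsets of the vertex set of $\sigma$ — each contributing one diagram — giving $2^{i+1}-1$ directly.)

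The main obstacle I anticipate is making the recursion's diagram count tight rather than merely an upper bound of the naive form $\prod$ or an over-inflated sum: the subtlety is that the tilted directions used at different levels of the inclusion-exclusion must be chosen so that the verbose diagram for a face $\tau$ is reused — i.e., the $2^{|\tau|}-1$ diagrams required to evaluate $\tau$'s correction term are precisely a sub-collection of a coherent family indexed by nonempty subsets of $\sigma$'s vertices, with no new diagrams introduced at deeper levels beyond those already counted. Verifying this sharing — that the total is $2^{i+1}-1$ and not something like $\sum_k\binom{i+1}{k+1}(2^{i-k+1}-1)$ — requires carefully tracking which tilt directions~\cite[Alg.~1]{fasy2024faithful} actually calls, and arguing that one family of $2^{i+1}-1$ directions (one per nonempty face, oriented to isolate that face) suffices for the entire recursion. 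I would handle this by setting up the tilt directions once, up front, indexed by nonempty subsets of $\sigma$'s vertex set, and then checking that every recursive call draws only from this pre-committed family.
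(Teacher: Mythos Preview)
The paper does not give a detailed proof here---it simply states that the proof combines Algorithm~1 and Theorem~39 of~\cite{fasy2024faithful}---so your plan already goes further than the paper does. Your eventual conclusion, that the required diagrams are in bijection with the nonempty faces of~$\sigma$ (of which there are $2^{i+1}-1$), is exactly the content of the cited Theorem~39 and is the right argument.

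The inductive wrapper you build around it, however, is both unnecessary and error-prone. Your displayed sum has an indexing slip: an $i$-simplex has $\binom{i+1}{i-k+1}=\binom{i+1}{k}$ faces of dimension $i-k$, not $\binom{i+1}{k+1}$; in particular your $k=0$ term counts $i+1$ copies of $\sigma$ rather than one. Even with the coefficient corrected, the naive recursive sum does not equal $2^{i+1}-1$ without the ``sharing'' you flag as the main obstacle. But that obstacle is an artifact of the recursion. The direct argument is simply that Algorithm~1 of~\cite{fasy2024faithful} issues exactly one query per nonempty face of~$\sigma$, in a direction tilted to isolate that face (which \cref{ass:indegree} guarantees exists), and an $i$-simplex has $2^{i+1}-1$ nonempty faces. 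No induction, no sharing argument, no recursive bookkeeping needed---you arrive at this in your parenthetical, and that parenthetical is the whole proof.
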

    
    The proof combines ideas from Algorithm 1 and Theorem 39
    of~\cite{fasy2024faithful}.

    \subsection{Challenges Computing Indegree with other Descriptor Types}
    
        As detailed in~\cite{fasy2024faithful},
        if we choose to use verbose Euler characteristic functions rather than
        verbose persistence diagrams or verbose Betti functions, we can no longer
        compute indegree. Instead, we can only compute \emph{even-/odd-degree}, or
        the number of even- (odd-) dimensional cofaces of and ``below'' a given
        simplex. While mild adaptations to the reconstruction algorithm
        of~\cite{fasy2024faithful} allows for this alternate type of query in the
        reconstruction process, our radial search algorithm (\cref{alg:findunfound})
        unavoidably relies on knowing the dimension of a coface. Namely,
        if the number of, e.g., even-dimensional cofaces of a given $i$-simplex $\sigma$ in
        some query direction matches the number we would expect if all candidates were cofacets of~$\sigma$, we still cannot guarantee they all are
        truly cofacets of $\sigma$, because lower/higher-dimensional cofaces may
        also contribute to this count.
    
    We are not aware of any methods to compute indegree using concise descriptors, other than reconstructing the entirety of $K$ using other methods, and using this information to report indegree (which rather defeats the purpose). 
    Development of such a method seems unlikely. 
    In various senses, concise descriptors have been shown to be weaker than their verbose counterparts~\cite{ordering, zhou2023beyond, ephemeral}.
    In particular, concise descriptors are
    sensitive to the ``flatness'' of a simplicial complex, or how close
    to affinely-dependent any subset of vertices is. Existing bounds on the size of faithful discretizations of concise directional transforms bound this flatness~\cite{curry2022many}, and without such restrictions on flatness, there are strict requirements on what directions are necessary~\cite[Corollary 1]{ordering}.
     
    \subsection{Connecting Reconstruction Algorithms to Faithful Discretizations}
    \label{append:faithful}
    If a set of topological descriptors can be used to reconstruct a simplicial complex, it is a faithful set. 
    Therefore, if we consider the set of topological descriptors that would be used to perform all our indegree queries (along with a set to reconstruct $K_0$), we would arrive at a faithful discretization of the associated directional transform, i.e., either the verbose persistent homology transform or the verbose Betti function transform.

    In \cref{sec:reconstruct}, we take a similar  perspective as~\cite{graphsearch,
    belton2019reconstructing}; namely, given no initial information about a
    simplicial complex $K$ other
    than general position and possibly $K_0$, we then aim to reconstruct $K$ using indegree queries.
    This perspective is fundamentally different from
    ~\cite{fasy2024faithful}, which, using \emph{total} information about $K$,
    defines a set of directions from which $K$ could be reconstructed (i.e., corresponding to a
    faithful discretization of the associated topological transform).
    Using initial knowledge of $K$
    to define the direction set
    allows~\cite{fasy2024faithful} to lessen the impact of unnecessary
    ``exploratory'' queries. In particular, when reconstructing $K_{i+1}$ given
    $K_i$, the reconstruction algorithm of~\cite{fasy2024faithful} certifies the
    presence of each $(i+1)$-simplex, and since the total number of
    $(i+1)$-simplices can be read off any single diagram, there is no need to test
    remaining candidates that are not true $(i+1)$-simplices.

    Since we assume no initial knowledge of $K$, unlike~\cite{fasy2024faithful}, we don't know
    where to look in order to certify only the true $(i+1)$-simplices; we inevitably
    encounter and test candidate~$(i+1)$-simplices that are not in~$K_{i+1}$. While
    this results in ``extra'' queries from the perspective of purely building a
    finite faithful set of directions, it is necessary for our framework.
\section{Proof of Lemma \ref{lem:orderingordering}}
\label{append:proofs}
\label{append:orderingordering}
To prove \cref{lem:orderingordering}, we begin by establishing two helpful
lemmas. The first serves as our eventual base case.

\begin{lemma}[Candidate-Ordering for All Vertices]
\label{lem:vertexordering}
    Let $K$ be a simplicial complex in $\R^d$. Then there exists a circle of directions that is candidate-ordering for every vertex of $K_0$.
\end{lemma}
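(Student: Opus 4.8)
The plan is to exhibit one circle of directions $\gamma$---equivalently, a $2$-plane $\Pi\subseteq\R^d$ through the origin, with $\gamma=\sph^{d-1}\cap\Pi$---that is candidate-ordering for every vertex of $K_0$ at once, chosen so that $\Pi$ is suitably generic. First I would dispose of the ``maximally perpendicular'' requirement: for a $0$-simplex $v$ we have $\dim(\aff(v))=0<d-1$ (recall $d>1$) and $\perp_v=\sph^{d-1}$, so every circle of directions is maximally perpendicular to $v$, and ``candidate-ordering for $v$'' becomes purely a statement about how the candidate vertices of $v$ project onto $\Pi$.

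Next I would reformulate that statement. Writing $q_{v,v'}:=\mathrm{proj}_\Pi(v'-v)$, \cref{def:angle} shows that the $\gamma$-normal of a candidate vertex $v'$ of $v$ relative to $v$ depends only on the direction of $q_{v,v'}$ (it is a fixed quarter-turn of that direction along $\gamma$), and equals $\gamma(0)$ exactly when $q_{v,v'}=0$. Hence $\gamma$ is candidate-ordering for $v$ provided (i) $q_{v,v'}\neq 0$ for every candidate vertex $v'$ of $v$, and (ii) no two candidate vertices $v_1',v_2'$ of $v$ have $q_{v,v_1'}$ and $q_{v,v_2'}$ pointing along a common ray. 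Here I would use that $K_0\cup\cand(v)$ is locally injective for every vertex $v$ (which holds under \cref{ass:reconstruction}): two distinct candidate cofacets $v\cup\{v_1'\}$ and $v\cup\{v_2'\}$ form an injective pair, hence meet only in $v$, so $v_1'-v$ and $v_2'-v$ are never positive scalar multiples of one another. Consequently each pair relevant to (ii) consists of two linearly independent vectors or of two oppositely directed ones; in the latter case $q_{v,v_1'}$ and $q_{v,v_2'}$ are, once nonzero, automatically oppositely directed, so only linearly independent pairs can violate (ii).

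I would then finish with a genericity argument over the Grassmannian $\mathrm{Gr}(2,d)$ of $2$-planes through the origin. For a fixed vertex $v$ and candidate vertex $v'$, the set of $\Pi$ with $q_{v,v'}=0$ (that is, $v'-v\perp\Pi$) is a proper, measure-zero subvariety, since $v'-v\neq 0$ and any $\Pi$ containing $v'-v$ lies outside it. For a fixed vertex $v$ and a fixed linearly independent pair $a=v_1'-v$, $b=v_2'-v$, the set of $\Pi$ violating (ii) is contained in the set of $\Pi$ for which $\mathrm{proj}_\Pi(a)$ and $\mathrm{proj}_\Pi(b)$ are linearly dependent, which is again a proper, measure-zero subvariety because the plane $\mathrm{span}\{a,b\}$ projects $a$ and $b$ to themselves and so lies outside it. As $K$ is finite, only finitely many such subvarieties are excluded, their union is a proper subset of the connected manifold $\mathrm{Gr}(2,d)$, and any $\Pi$ avoiding it gives a circle $\gamma$ candidate-ordering for all of $K_0$. (When $d=2$ no genericity is needed: there $\mathrm{proj}_\Pi$ is the identity, and (i)--(ii) follow directly from local injectivity.)

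The step I expect to be the real point---rather than the genericity bookkeeping---is the degenerate configuration the argument is designed to sidestep: if three vertices $v,v_1',v_2'$ are collinear with $v_1'$ and $v_2'$ on the \emph{same} side of $v$, then $q_{v,v_1'}$ and $q_{v,v_2'}$ span a common ray for \emph{every} $\Pi$, so no circle of directions can separate them; this is exactly why the general-position hypothesis (\cref{ass:reconstruction} in dimension $0$) is needed, and once it is in force the remainder is the routine ``a generic linear subspace misses finitely many subvarieties'' argument, the only care being to verify that each excluded condition is non-trivial (as above) and, in low ambient dimension, that there is freedom to use---or, for $d=2$, that the excluded set is empty.
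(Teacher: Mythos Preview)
Your proposal is correct and follows essentially the same idea as the paper: choose a generic circle of directions so that no ``bad'' triple of vertices produces a collision of $\gamma$-normals. The paper's version is more concrete: for $d>2$ it lets $S\subset\sph^{d-1}$ be the set of directions normal to the plane spanned by some affinely independent triple of vertices of $K_0$, then picks any great circle $\gamma$ disjoint from $S$; if two candidate vertices $v_1,v_2$ of some $v$ shared a $\gamma$-normal, the corresponding $s\in\gamma$ would be normal to $\aff(v,v_1,v_2)$ and hence lie in $S$, a contradiction. Your Grassmannian formulation is the same genericity argument in more abstract dress; what it buys you is a clean separation of the two failure modes (projection to zero, and projections along a common ray) and an explicit invocation of \cref{ass:reconstruction} to rule out the collinear-same-side configuration---the paper handles that case more implicitly, via \cref{lem:order} for $d=2$ and by simply asserting that the offending triple is affinely independent for $d>2$.
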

\begin{proof}
     First, consider the case $d=2$.
     By \cref{lem:order}, each vertex $v \in K_0$ has some candidate-ordering circle.
     Then the unique circle of directions is candidate-ordering, and the
     claim is satisfied trivially. Suppose then that $d>2$.
     Each triple of affinely independent vertices defines a plane; let $S
     \subset \sph^{d-1}$ be the set of directions normal to such a plane, for
     some triple in $K_0$. Since $K_0$ is finite, so is $S$, and we can find
     some circle $\gamma$ such that $\gamma \cap S = \emptyset$. We claim
     $\gamma$ is candidate-ordering for all vertices.

     Consider some $v \in K_0$, and suppose, towards a contradiction, that
     $\gamma$ is \emph{not} candidate-ordering for $v$. That is, when rotating
     around $v$ by the circle $\gamma$, there are two candidate vertices of $v$,
     which we call $v_1$ and $v_2$, that appear at the same ``angle'' around
     $\gamma$. In particular, this means that, for some $s \in \gamma$, we have
     $s \cdot v = s\cdot v_1 = s \cdot v_2$, but $\aff(v, v_1, v_2)$ is
     two-dimensional. But then this $s$ is normal to $\aff(v, v_1, v_2)$, i.e.,
     $s \in S$, which contradicts our assumption that $\gamma$ is disjoint from~$s$.
\end{proof}

Next, we observe that the candidate vertices of a simplex are also candidate vertices for a facet of that simplex.

\begin{lemma}
\label{lem:sharedcandidates}
    Let $K$ be a simplicial complex in $\R^d$, and, for $0< i< d-1$ consider
    an~$(i-1)$-simplex $\rho \in K$. Let $\sigma$ be a cofacet of $\rho$. Then,
    if $v \in K_0$ is a candidate vertex of~$\sigma$, it is also a candidate
    vertex of $\rho$.
\end{lemma}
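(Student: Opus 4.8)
The plan is to unwind the definition of candidate vertex (Definition~\ref{def:candidate}) for $\sigma$ and show each required ingredient transfers to $\rho$. Suppose $v$ is a candidate vertex of the $i$-simplex $\sigma$, witnessed by the $(i+1)$-simplex $\tau := \sigma \cup \{v\}$; by definition, all facets of $\tau$ that are defined on $\sigma \cup \{v\}$ lie in $K_i$, and $K_i \cup \{\tau\}$ is a subcomplex of some complex with property $X$. I want to exhibit the $i$-simplex $\tau' := \rho \cup \{v\}$ as the witness that $v$ is a candidate vertex of $\rho$, i.e., I must check that the facets of $\tau'$ defined on $\rho \cup \{v\}$ all lie in $K_{i-1}$, and that $K_{i-1} \cup \{\tau'\}$ extends to a complex with property $X$.

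First I would handle the ``facets are present'' condition. Since $\rho$ is a facet of $\sigma$, write $\rho = \sigma \setminus \{w\}$ for the vertex $w \in \sigma$; then $\tau' = \rho \cup \{v\} = \tau \setminus \{w\}$, so $\tau'$ is a facet of $\tau$. Every facet of $\tau'$ is then a codimension-two face of $\tau$, hence contained in $K_i$ (as all facets of $\tau$ lie in $K_i$, and $K_i$ is a complex so it contains all their faces), and being $(i-1)$-dimensional it lies in $K_{i-1}$. This gives the first requirement, and in particular $\tau'$ itself has its boundary in $K_{i-1}$.

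For the property-$X$ requirement, I would argue that $K_{i-1} \cup \{\tau'\}$ is a subcomplex of $K_i \cup \{\tau\}$: indeed $K_{i-1} \subseteq K_i$, and $\tau' \subseteq \tau$ is a face of $\tau$, so $\tau' \in K_i \cup \{\tau\}$, and closure under faces is inherited. Since $K_i \cup \{\tau\}$ extends to a complex with property $X$ by hypothesis, and property $X$ (locally injective, embedded, or no condition) is closed under taking subcomplexes, $K_{i-1} \cup \{\tau'\}$ also extends to a complex with property $X$ --- the same witnessing complex works. The main obstacle, such as it is, is making precise that the ``$K_i \cup \tau$ is a subcomplex of some complex with property $X$'' clause behaves monotonically; this is immediate for the cases (locally injective / embedded / unconstrained) we care about since those are downward-closed, but a careful statement should note that we only invoke property $X$ through the existence of an ambient complex, and any subcomplex of a property-$X$ complex trivially has the same ambient witness. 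Assembling these observations shows $v$ is a candidate vertex of $\rho$, completing the proof.
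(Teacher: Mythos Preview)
Your proposal is correct and follows essentially the same approach as the paper: both arguments observe that the witness $\tau' = \rho \cup \{v\}$ is a face of $\tau = \sigma \cup \{v\}$, so its boundary lies in $K_{i-1}$ and $K_{i-1} \cup \{\tau'\}$ sits inside the same property-$X$ complex that witnesses $\tau$. Your write-up is somewhat more explicit about why the facets of $\tau'$ land in $K_{i-1}$ and why the ambient witness transfers, but the underlying idea is identical.
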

\begin{proof}
   Recall from \cref{def:candidate} that the simplices of $K_i$ defined on $\sigma
   \cup \{v\}$ form the boundary of an $(i+1)$-simplex $\tau$ such that $K_i
   \cup \tau$ is subcomplex of some simplicial complex $K'$ that satisfies
   \cref{ass:reconstruction}. Note that the simplices defined on $\sigma \cup
   \{v\}$ are a superset of the simplices defined on $\rho \cup \{v\}$; in
   particular, simplices on $\rho \cup \{v\}$ in $K_{i-1}$ form the boundary of
   an $i$-simplex $\sigma'$. Then $K_{i-1} \cup \{\sigma'\}$ is also a
   subcomplex of $K'$, which we know satisfies \cref{ass:reconstruction}.
   Therefore, $v$ is also a candidate vertex for $\rho$.
\end{proof}

Finally, we are ready to prove \cref{lem:orderingordering}, which we restate below.

\orderingordering*

\begin{proof}
    We claim that we can construct such a sweeping order by iteratively calling
    \cref{alg:sweep} and specifically using candidate-ordering circles
    on \alglnref{sweep:gamma} for each call.
    We proceed by induction on $i$. 
    Consider the base case $i=0$.
    By \cref{lem:vertexordering} there exists some circle of directions that is candidate-ordering for \emph{all} vertices. 
    Choosing some arbitrary direction~$s$ from this circle on \alglnref{sweep:vertdir} in \cref{alg:sweep} results in the sweeping order $((v_j, s))_{j=1}^{n_0}$, which satisfies the claim.
    
    Next, for some $\ell-1\geq 0$, suppose that~$((\rho_{j},
    s_{j}))_{j=1}^{n_{\ell-1}}$ is a sweeping order for $K_{\ell-1}$, where
    every $s_j$ is part of a candidate-ordering circle $\gamma_j$ around
    $\rho_j$, and where we specifically rotate around $\gamma_j$ in
    \alglnref{sweep:gamma} to order cofacets of $\rho_j$.
    Suppose we compute a sweeping order for $K_\ell$ by using $((\rho_{j},
    s_{j}))_{j=1}^{n_{\ell-1}}$ as input to \cref{alg:sweep}. Consider $(\sigma,
    s)$, an arbitrary term of the output, and let $\rho_k$ denote the first
    $(\ell-1)$-simplex in the sweeping order for $K_{\ell-1}$ that contains
    $\sigma$ as a cofacet. That is, $(\rho_k, s_k)$ is the pair considered on
    \alglnref{sweep:forbegin} of \cref{alg:sweep} that corresponds to identifying
    $\sigma$. Furthermore, $\gamma_k$ used on \alglnref{sweep:gamma} is candidate-ordering for $\rho_k$ and $s \in \gamma_k$.
    
    Now, suppose, towards a contradiction, that $s$ is not part of any candidate-ordering circle for $\sigma$. In particular, this means that $\sigma$ has two candidate vertices, $v_1$ and $v_2$, such that the $s$-coordinates of $\sigma, v_1,$ and $v_2$ agree, but such that $\aff(\sigma \cup \{v_1\}) \neq \aff(\sigma \cup \{v_2\}$), i.e., $v_1$ and $v_2$ would be assigned the same angle around a circle containing $s$.
    However, by \cref{lem:sharedcandidates}, we know that $v_1$ and $v_2$ are
    also candidate vertices for $\rho_k$. Furthermore, since $\rho_k$ is a
    facet of $\sigma$, the $s$-coordinates of $\rho_k$, $v_1$, and $v_2$ agree,
    and $\aff(\rho \cup \{v_1\}) \neq \aff(\rho\cup \{v_2\})$. This contradicts
    our assumption that $\gamma_k$ was candidate-ordering for $\rho_k$.
    Then, by induction, we have shown the claim.
\end{proof}

\end{document}